\renewcommand\footnotetextcopyrightpermission[1]{}
\tikzstyle{sd*smallbox}=[fill=white, draw=black, shape=rectangle]
\tikzstyle{sd*box}=[fill=white, draw=black, shape=rectangle, minimum height=0.75cm, minimum width=0.75cm]
\tikzstyle{sd*tallbox}=[fill=white, draw=black, shape=rectangle, minimum height=1.25cm, minimum width=0.75cm]
\tikzstyle{sd*widebox}=[fill=white, draw=black, shape=rectangle, minimum width=1.25cm]
\tikzstyle{sd*2widebox}=[fill=white, draw=black, shape=rectangle, minimum height=0.75cm, minimum width=1.75cm]
\tikzstyle{sd*3widebox}=[fill=white, draw=black, shape=rectangle, minimum height=0.75cm, minimum width=2.25cm]
\tikzstyle{hg*smallbox}=[fill=white, draw=black, shape=rectangle, rounded corners, tikzit shape=circle]
\tikzstyle{hg*box}=[fill=white, draw=black, shape=rectangle, rounded corners, minimum height=0.75cm, minimum width=0.75cm, tikzit shape=circle]
\tikzstyle{hg*tallbox}=[fill=white, draw=black, shape=rectangle, rounded corners, minimum height=1.25cm, minimum width=0.75cm, tikzit shape=circle]
\tikzstyle{sd*dot}=[none, tikzit fill={rgb,255: red,128; green,128; blue,128}, tikzit shape=rectangle]
\tikzstyle{hg*dot}=[fill=black, shape=circle, inner sep=0pt, minimum height=0.1cm, minimum width=0.1cm]
\tikzstyle{sd*dup}=[fill=black, shape=circle, scale=0.5]
\tikzstyle{hg*dup}=[fill=white, draw=black, inner sep=2mm, shape=rectangle, rounded corners, tikzit shape=circle, label={center:$\Delta$}]
\tikzstyle{sd*tupler}=[none, inner sep=3pt, fill=white, draw=black, shape=circle, thick, tikzit shape=rectangle]
\tikzstyle{sd*eval}=[fill=white, draw=black, thick, shape=semicircle, rotate=180, inner sep=0pt, minimum height=0.35cm, minimum width=0.35cm, tikzit shape=rectangle]
\tikzstyle{sd*unit}=[fill=white, draw=black, thick, shape=semicircle, inner sep=0pt, minimum height=0.35cm, minimum width=0.35cm, tikzit shape=rectangle]
\tikzstyle{hg*eval}=[fill=white, draw=black, shape=rectangle, rounded corners=0.5mm, minimum height=0.35cm, minimum width=0.35cm, label={center:$ev$}, tikzit shape=circle]
\tikzstyle{sd*wtrap}=[fill=white, draw=black, thick, shape=trapezium, inner sep=0.1pt, minimum height=0.3cm, minimum width=0.3cm, tikzit shape=rectangle]
\tikzstyle{hg*wtrap}=[fill=white, draw=black, thick, shape=trapezium, inner sep=0.1pt, minimum height=0.3cm, minimum width=0.3cm]
\tikzstyle{sd*btrap}=[fill=black, draw=black, thick, shape=trapezium, inner sep=0.1pt, minimum height=0.3cm, minimum width=0.3cm, tikzit fill=black, tikzit shape=rectangle]
\tikzstyle{hg*btrap}=[fill=black, draw=black, thick, shape=trapezium, inner sep=0.1pt, minimum height=0.3cm, minimum width=0.3cm, tikzit fill=black]
\tikzstyle{sd*wtrap'}=[fill=white, draw=black, thick, shape=trapezium, rotate=180, inner sep=0.1pt, minimum height=0.3cm, minimum width=0.3cm, tikzit shape=rectangle]
\tikzstyle{hg*wtrap'}=[fill=white, draw=black, thick, shape=trapezium, rotate=180, inner sep=0.1pt, minimum height=0.3cm, minimum width=0.3cm]
\tikzstyle{sd*discard}=[fill=black, shape=circle, scale=0.5]
\tikzstyle{hg*discard}=[fill=white, draw=black, shape=rectangle, rounded corners=0.5mm, minimum height=0.35cm, minimum width=0.35cm, label={center:$!$}, tikzit shape=circle]
\tikzstyle{sd*state}=[fill=white, draw=black, thick, isosceles triangle, isosceles triangle apex angle=90, rotate=90, scale=0.8, inner sep=0.01pt, minimum height=0.3cm, minimum width=0.15cm, tikzit shape=rectangle]
\tikzstyle{hg*state}=[fill=white, draw=black, thick, isosceles triangle, isosceles triangle apex angle=90, rotate=90, scale=1.2, inner sep=0.01pt, minimum height=0.3cm, minimum width=0.3cm]
\tikzstyle{functor}=[text={rgb,255: red,0; green,128; blue,128}, tikzit draw={rgb,255: red,0; green,128; blue,128}]
\tikzstyle{sd*curve}=[-, tikzit draw=blue]
\tikzstyle{sd*thickcurve}=[-, very thick, tikzit draw={rgb,255: red,0; green,128; blue,128}]
\tikzstyle{hg*curve}=[->, >=latex]
\tikzstyle{sd*morph}=[->, fill=none, draw=none, tikzit draw=orange]
\tikzstyle{hg*morph}=[->, tikzit draw=orange]
\tikzstyle{sd*ad}=[-, fill={rgb,255: red,195; green,236; blue,255}, draw={rgb,255: red,0; green,87; blue,134}, double]
\tikzstyle{sd*bg}=[-, fill={rgb,255: red,245; green,237; blue,229}, draw=none, tikzit draw=Apricot, tikzit fill={rgb,255: red,215; green,207; blue,200}]
\tikzstyle{sd*interface}=[-, fill=none, draw=none]
\tikzstyle{sd*fwd}=[-, fill={rgb,255: red,244; green,178; blue,121}, draw={rgb,255: red,141; green,61; blue,0}, double]
\tikzstyle{sd*rev}=[-, double, fill={rgb,255: red,105; green,240; blue,202}, draw={rgb,255: red,0; green,104; blue,73}]
\tikzstyle{sd*bubble}=[-, fill=white, draw=black, thick, rounded corners=5mm]
\tikzstyle{hg*ad}=[-, fill={rgb,255: red,103; green,103; blue,255}, draw=blue, rounded corners=1mm]
\tikzstyle{hg*bg}=[-, fill={rgb,255: red,191; green,191; blue,191}, draw=gray, dashed]
\tikzstyle{hg*interface}=[-, fill={rgb,255: red,211; green,211; blue,255}, draw=blue, dashed]
\tikzstyle{hg*fwd}=[-, fill={rgb,255: red,255; green,13; blue,94}, draw={rgb,255: red,191; green,0; blue,64}, rounded corners=1mm]
\tikzstyle{hg*rev}=[-, fill={rgb,255: red,177; green,255; blue,177}, draw={rgb,255: red,0; green,103; blue,0}, rounded corners=1mm]
\tikzstyle{hg*bubble}=[-, fill=white, draw=black, thick, rounded corners=5mm]
\tikzstyle{sd*grad}=[-, fill={rgb,255: red,244; green,172; blue,213}, draw={rgb,255: red,113; green,67; blue,93}, double]
\tikzstyle{invisible}=[-, fill=none, draw=none, tikzit draw=black, dashed]
\tikzstyle{greenfunctor}=[-, fill=none, draw={rgb,255: red,0; green,128; blue,128}, tikzit draw={rgb,255: red,0; green,128; blue,128}]
\NewDocumentCommand{\Hyp}{O{} O{}}{\mathsf{Hyp}_{\scriptscriptstyle #1}^{\scriptscriptstyle #2}} % hpergraphs
\NewDocumentCommand{\HyperG}{O{} O{}}{\mathcal{H}_{\scriptscriptstyle #1}^{\scriptscriptstyle #2}} % hypernets
\def\F{{\mathcal F}}
\def\G{{\mathcal G}}
\def\H{{\mathcal H}}
\def\K{{\mathcal K}}
\def\L{{\mathcal L}}
\def\R{{\mathcal R}}
\def\Set{{\mathbf{Set}}}
\def\id{\text{id}}
\newcommand{\diagram}[1]{
    \tikzpicturedependsonfile{#1.tikz}
    \tikzsetnextfilename{#1}
    \tikzfig{#1}
}
\newcommand{\hd}[1]{\scalebox{.5}{\diagram{#1}}}
\newcommand{\idty}{\hd{pics/tikzit/components/id-typed}}
\newcommand{\seqty}{\hd{pics/tikzit/components/sequential-composition-typed}}
\newcommand{\morphism}{\hd{pics/tikzit/components/morphism-typed}}
\newcommand{\functorcomp}{\diagram{pics/tikzit/pam/functor-comp}}
\newcommand{\functorid}{\diagram{pics/tikzit/pam/functor-id}}
\newcommand{\constzero}{\scalebox{.5}{\hspace{-10pt}\diagram{pics/tikzit/components/zero-sd}}}
\newcommand{\constbarzero}{\scalebox{.5}{\hspace{-10pt}\diagram{pics/tikzit/components/barzero-sd}}}
\newcommand{\opgeneric}[2]{\scalebox{.5}{
    \tikzpicturedependsonfile{pics/tikzit/components/op-generic.tikz}
    \tikzsetnextfilename{generic/#2}
    \renewcommand{\GENERIC}[0]{#1}
    \hspace{-10pt}
    \tikzfig{pics/tikzit/components/op-generic}
}}
\newcommand{\gradgeneric}[2]{\scalebox{.5}{
    \tikzpicturedependsonfile{pics/tikzit/components/op-generic-grad.tikz}
    \tikzsetnextfilename{generic/#2}
    \renewcommand{\GENERIC}[0]{#1}
    \hspace{-10pt}
    \tikzfig{pics/tikzit/components/op-generic-grad}
}}
\newcommand{\opadd}{\scalebox{.5}{\hspace{-10pt}\diagram{pics/tikzit/components/add-sd}}}
\newcommand{\opoplus}{\scalebox{.5}{\hspace{-10pt}\diagram{pics/tikzit/components/oplus-sd}}}
\newcommand{\GENERIC}[0]{}
\newcommand{\binomsd}[2]{\left(\stackanchor{$#1$}{$#2$} \right)}
\newcommand{\seval}[1]{\llbracket #1 \rrbracket}
\newcommand{\boldit}[1]{\textbf{\textit{#1}}}
\definecolor{fwd-color}{RGB}{244,178,121}
\definecolor{rev-color}{RGB}{105,240,202}
\definecolor{adj-color}{RGB}{195,236,255}
\definecolor{grad-color}{RGB}{244,172,213}
\newtheorem{remark}{Remark}
\begin{document}

%% Title information
\title[Functorial string diagrams for RAD]
{Functorial String Diagrams for Reverse-Mode Automatic Differentiation}

\author{Mario Alvarez-Picallo}
\affiliation{
  \department{Programming Languages Laboratory}
  \institution{Huawei Research Centre}
  \streetaddress{2 Semple Str.}
  \city{Edinburgh}
  \state{Scotland}
  \postcode{EH3 8BL}
  \country{United Kingdom}
}
\email{mario.alvarez.picallo@huawei.com}

\author{Dan R. Ghica}
\affiliation{
  \position{Director}
  \department{Programming Languages Laboratory}
  \institution{Huawei Research Centre}
  \streetaddress{2 Semple Str.}
  \city{Edinburgh}
  \state{Scotland}
  \postcode{EH3 8BL}
  \country{United Kingdom}
}
\email{dan.ghica@gmail.com}
\affiliation{
  \department{Computer Science}
  \institution{University of Birmingham}
  \city{Birmingham}
  \state{England}
  \postcode{B15 2TT}
  \country{United Kingdom}
}

%% Author with single affiliation.
\author{David Sprunger}
\affiliation{
  \department{Computer Science}
  \institution{University of Birmingham}
  \city{Birmingham}
  \state{England}
  \postcode{B15 2TT}
  \country{United Kingdom}
}
\email{d.sprunger@bham.ac.uk}

%% Author with single affiliation.
\author{Fabio Zanasi}
\affiliation{
  \department{Computer Science}
  \institution{University College London}
  \city{London}
  \state{England}
  \postcode{WC1E 6BT}
  \country{United Kingdom}
}
\email{f.zanasi@ucl.ac.uk}

%% Abstract
%% Note: \begin{abstract}...\end{abstract} environment must come
%% before \maketitle command
\begin{abstract}
We enhance the calculus of string diagrams for monoidal categories with hierarchical features in order to capture closed monoidal (and cartesian closed) structure. 
Using this new syntax we formulate an automatic differentiation algorithm for (applied) simply typed lambda calculus in the style of~\cite{DBLP:journals/toplas/PearlmutterS08} and we prove for the first time its soundness. 
To give an efficient yet principled implementation of the AD algorithm we define a sound and complete representation of hierarchical string diagrams as a class of hierarchical hypergraphs we call \emph{hypernets}. 
\end{abstract}

%% 2012 ACM Computing Classification System (CSS) concepts
%% Generate at 'http://dl.acm.org/ccs/ccs.cfm'.

%% End of generated code

%% Keywords
%% comma separated list
\keywords{string diagrams, automatic differentiation, graph rewriting, hierarchical hypergraphs}  
%% \keywords are mandatory in final camera-ready submission

%% \maketitle
%% Note: \maketitle command must come after title commands, author
%% commands, abstract environment, Computing Classification System
%% environment and commands, and keywords command.
\maketitle
\thispagestyle{empty}

\section{Introduction}

This paper covers three main topics which support, motivate, and reinforce each other: 
\boldit{reverse automatic differentiation (AD)}, \boldit{string diagrams}, and \boldit{(hyper)graph rewriting}. 

AD is an established technique for evaluating the derivative of a function specified by a computer program, a particularly challenging exercise when the program contains higher-order sub-terms. 
This technique came to recent prominence due to its important role in algorithms for machine learning~\cite{DBLP:journals/jmlr/BaydinPRS17}.
We focus in particular on the influential algorithm defined in~\cite{DBLP:journals/toplas/PearlmutterS08}, which lies at the foundation of many practical implementations of AD. 
The main novel contribution of our paper is to prove, for the first time, the \boldit{soundness} of this particular style of AD algorithm. 

String diagrams are a formal graphical syntax used in the representation of morphisms in monoidal categories~\cite{selinger2010survey} which is finding an increasing number of applications in a wide range of mathematical, physical, and engineering domains. 
We contribute to the development of string diagrams by formulating a new \textbf{\textit{hierarchical string diagram calculus}}, with associated equations, suitable for the representation of closed monoidal (and cartesian closed) structures. 
This innovation is, as we shall see in the paper, warranted:  the hierarchical string diagrammatic syntax allows for a more intelligible formulation of a complex algorithm and, most importantly, a new style of inductive argument which leads to a relatively simple proof of soundness.

Finally, hierarchical hypergraphs are given as a concrete and efficient representation of hierarchical string diagrams, which pave the way towards efficient and effective implementation of AD as graph rewriting in the well established framework of \boldit{double-pushout (DPO) rewriting}~\cite{DBLP:conf/mfcs/EhrigK76}. 
Moreover, we identify a class of {hierarchical hypergraphs}, which we call \boldit{hypernets}, which are a \boldit{sound and complete} representation of the hierarchical string diagram calculus. 
This is the third and final contribution of our paper.

%\begin{quotation}\emph{ 
%For practical purposes, reasoning in the graphical language is almost always easier than reasoning from the axioms. On the other hand, the graphical definition is not very useful when one has to check whether a given category is monoidal; in this case, checking finitely many axioms is easier.}\hfill\cite{selinger2010survey}
%\end{quotation}

\section{Higher-order string diagrams}
  
String diagrams are a convenient alternative notation for constructing morphisms, in particular in (strict) monoidal categories. 
In this paper we largely build on the syntax proposed in~\cite{DBLP:conf/csl/Mellies06}, with only a few cosmetic changes aimed at making higher-order concepts more perspicuous. 
String diagrams in this work are to be read from top to bottom.

\subsection{Functorial string diagrams}

We start with the basic language of categories, ranged over by $\mathcal C, \mathcal D\ldots$.
This language consists of a collection of \emph{objects} ranged over by $A, B, \ldots$ 
and two families of terminal symbols, \emph{identities} $\id_A:A\to A$, represented as an $A$-labelled vertical stem~\idty
, and morphisms $f:A\to B$, represented by labelled boxes with 
an $A$-labelled top stem (which we sometimes call \emph{input} or \emph{operand}) and a $B$-labelled bottom stem (which we call \emph{output} or \emph{result})~\morphism. 
We may distinguish (families of) terminal symbols in the diagram language with particular geometrical shapes instead of labelled boxes,
much in the way we have artificially disgtinguished identities from other morphisms. 

Terms, ranged over by $f, g, \ldots$, are created using composition. 
Given  $f:A\to B$, and $g:B\to C$ we write $f;g:A\to C$ as the \emph{stacking} of the diagrams for $f$ and $g$. 
Since the output of $f$ must match the input of $g$ we connect the corresponding stems, to give a graph-like appearance to the string diagram~\seqty. 
We enforce two properties of composition familiar from category theory.
First, composition is associative, meaning $(f;g);h=f;(g;h)=f;g;h$.
This identification is subsumed by the diagrammatic notation.
Second, we require the identity axiom $f;\id=\id;f=f$.
Diagrammatically, this means the lengths of the stems of a diagram can be lengthened or shortened without ambiguity.

\begin{wrapfigure}{r}{0.2\textwidth}
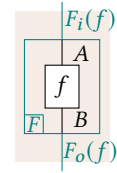

\centering\diagram{pics/tikzit/pam/combinatorbox}
\caption{An example frame}
\label{fig:frame}
\end{wrapfigure}

We extend our string diagram language with \emph{labelled frames} 
which indicate mappings between morphisms of different categories. 
The application of a mapping $F$ to a morphism, as a string diagram, is indicated by 
drawing an $F$-labelled frame around the morphism and 
modifying the stems of the diagram as appropriate, as seen in Fig.~\ref{fig:frame}.
Note that in this diagram the stems and morphims inside the frame are
a different color to the frame and the stems outside the frame.
This is an indication that objects and morphisms belong to potentially distinct categories.
When the map goes from a category to itself, we may use the same color inside and out of the frame,
but often leave the frame itself a different color to emphasize the mapping.

Such morphism mapping $F$ constitutes a \emph{functor} if it satisfies the following properties.
First, there must be a mapping on objects, 
which we also denote $F$ by common abuse of notation,
such that $F_i(f) = F(A)$ and $F_o(f) = F(B)$ for all $f: A \to B$ in the source category.
Second, this mapping must respect basic categorical structures,
expressed in the language of string diagrams in Fig.~\ref{fig:functor-properties}.
We use $1_{\mathcal C}$ for the identity functor. 
Given two functors $F, G$ with matching domains and codomains we write $FA=F(A)$ and $FGA=F(G(A))$ in the sequel. 

\begin{figure}[h]
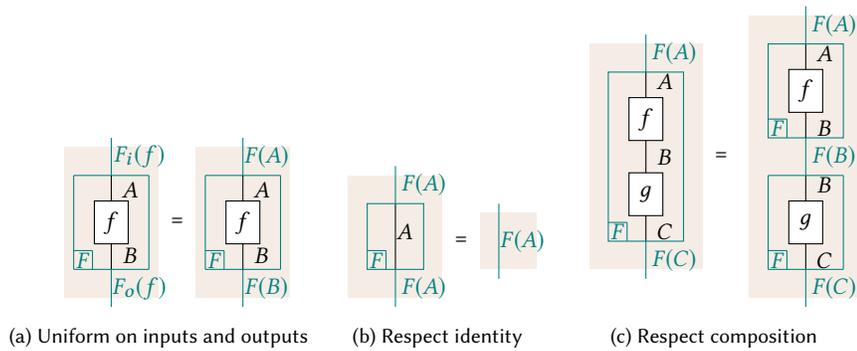

\centering
\subcaptionbox{Uniform on inputs and outputs}
{\quad\diagram{pics/tikzit/pam/functordef}\quad}
\subcaptionbox{Respect identity}{\functorid}
\subcaptionbox{Respect composition}{\functorcomp}
\caption{Properties of functors}
\label{fig:functor-properties}
\end{figure}

\newpage
The diagrammatic notation can be generalised to bifunctors in the obvious way, by drawing two side-by-side boxes for the arguments. 
One bifunctor that plays a special role in string diagram is the \emph{tensor product} or \emph{monoidal product}, in particular when it is \emph{strict}.  
The tensor is represented diagrammatically as:

\begin{center}
\diagram{pics/tikzit/pam/bifunctor} = 
\diagram{pics/tikzit/pam/parallel-comp-strict} = 
\diagram{pics/tikzit/pam/parallel-comp}
\end{center}

The diagram above contains three representations. 
In the first one we can see tensor as a bifunctor, with the two separate boxes indicating the two arguments of the bifunctor. 
The second one is special notation for the tensor, essentially hiding the functorial box and using a graphical convention (the horizontal line) to represent the `unravelling' of the tensored-labelled stem into components. 
Finally, the third one is special notation for strict monoidal tensor, in which the tensor $A\otimes C$ is represented as the list of its components $[A, C]$. 
The strict diagram absorbs the associativity isomorphisms and makes the tensor associative \emph{on the nose}: 

\begin{center}
\diagram{pics/tikzit/pam/left-assoc} =
\diagram{pics/tikzit/pam/triple-assoc} = 
\diagram{pics/tikzit/pam/right-assoc}
\end{center}

Henceforth we will work in the strict setting, but it will be sometimes convenient to \emph{de-strictify} a diagram and group individual stems in stems with tensor types. 
Coherence (and strictness) ensure that such de-strictifiations can be always performed unambiguously. 

In the strict setting we also have special notation for the unit $I$ of the tensor, which we represent as the empty list; identity on $I$ is represented as empty space. 
It is immediate then, diagrammatically, that $f\otimes \id_I=\id_I\otimes f=f$. 

We further extend the string diagram with the concept of \emph{natural transformation} between functors with the same domains and same codomains. 
Natural transformations are object-indexed families of morphisms written as $\theta_A:FA\to GA$ (or just $\theta:F\to G$) which obey the following family of axioms, expressed in the language of string diagrams as:

\begin{center}
\diagram{pics/tikzit/pam/natl-trans-pre} = 
\diagram{pics/tikzit/pam/natl-trans-post}
\end{center}

One particularly interesting example of natural transformation is \emph{symmetry}, 
written as $\gamma_{A,B}:A\otimes B\to B\otimes A$, 
for which we use the special geometric shape of two crossing wires. 
The fact that it is a natural transformation immediately imples that

\begin{center}
\diagram{pics/tikzit/pam/swap-def}\quad 
\diagram{pics/tikzit/equations/natural-swap}
\end{center}

Symmetry is also an involution, i.e. $\gamma_{A,B};\gamma_{B,A}=\id_A\otimes \id_B$.

For functors $F:\mathcal D\to \mathcal C$ and $G:\mathcal C\to \mathcal D$ such that natural transformations $\epsilon:FG\to 1_{\mathcal C}$ (called \emph{the counit}), $\eta:1_{\mathcal D}\to GF$ (called \emph{the unit}) exist, they form an adjunction if and only if they satisfy the following family of axioms:

\begin{center}
For all 
\diagram{pics/tikzit/pam/counit} and 
\diagram{pics/tikzit/pam/unit} we have that 
\diagram{pics/tikzit/pam/unitcounit2} and 
\diagram{pics/tikzit/pam/unitcounit1}.
\end{center}
In this situation, we say $F$ is a left adjoint and $G$ is the right adjoint.

We adopt the convention of writing the counit of an adjunction as a downward pointing semicircle,
the unit as an upward pointing semicircle, and 
omitting the label when the map is clear from context. 
Note that~\cite{DBLP:conf/csl/Mellies06} does not discuss adjunctions specifically, 
although the streamling of the notations and calculations with adjunctions is a prime benefit of the string diagram notation, and 
no additional technical content is required. 

\subsection{String diagrams for monoidal-closed and cartesian-closed categories}

Monoidal closed categories and cartesian closed categories are categorical models for the linear and simply-typed lambda calculus, respectively. 
A monoidal closed category arises if for every object $X$ in the category, 
the (endo)functor $F_X(A)= A\otimes X$ has a right adjoint $G_X(A)=X\multimap A$.
Diagrammatically, we depict these functors as:

\begin{center}
\diagram{pics/tikzit/pam/right-product}\quad and \quad
\diagram{pics/tikzit/pam/exponential}
\end{center}

Instantiated to the families of functors $F_X, G_X$ above, 
the naturality and adjunction equations are expressed in string diagrams as in Fig.~\ref{fig:mcc}. 
The counit of the adjunction is normally called \textit{eval}, and 
we call the unit \textit{coeval} for the sake of symmetry in terminology and by analogy with compact-closed categories.
\begin{figure}[ht]
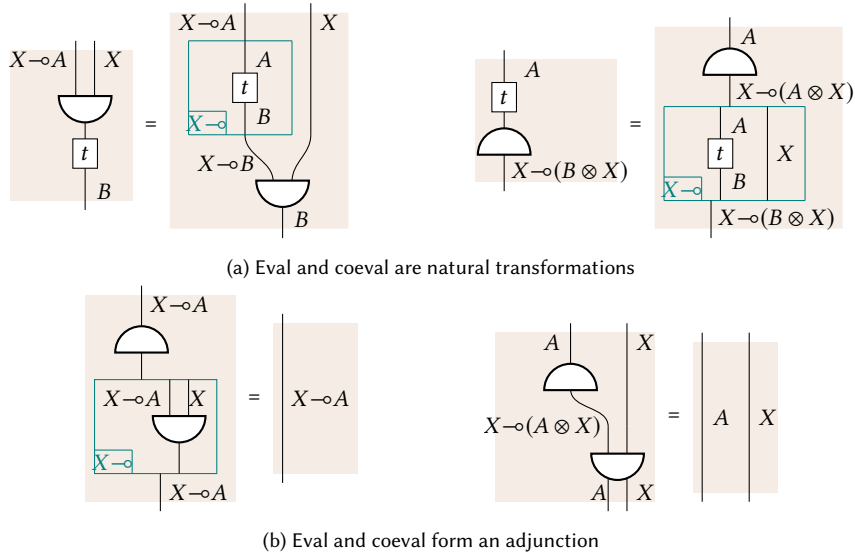

\centering
\subcaptionbox{Eval and coeval are natural transformations}[\textwidth]{
\diagram{pics/tikzit/equations/natural-counit}\qquad\qquad
\diagram{pics/tikzit/equations/natural-unit}}
\subcaptionbox{Eval and coeval form an adjunction}[\textwidth]{
\diagram{pics/tikzit/pam/coevaleval}\qquad\qquad
\diagram{pics/tikzit/pam/evalcoeval}}
\caption{String diagram representation of MCC axioms.}
\label{fig:mcc}
\end{figure}

To further expand our diagrammatic language to cartesian closed categories, one easy way is to add natural transformations $\delta_A:A\to A\otimes A$ (contraction) and $\omega_A:A\to I$ (weakening) such that 
$\delta_A;\omega_A\otimes \id_A = \id_A=\delta_A;\omega_A\otimes \id_A$~\cite{heunen2012lectures}. 
We represent both of these natural transformations with a black dot, disambiguated by the quantity of results.
The monoid equations are~\scalebox{.6}{\diagram{pics/tikzit/equations/copy-monoid}}. 
Copying and discarding are both consequences of naturality, 
i.e.~\scalebox{.5}{\diagram{pics/tikzit/equations/natural-copy}} 
and~\scalebox{.5}{\diagram{pics/tikzit/equations/natural-discard}}, respectively. 

Here we have presented adjunctions with unit and counit natural transformations.
An equivalent description of adjunctions involves a natural bijection between sets of morphisms.
In the case of monoidal or cartesian closed categories, this bijection is between
${\mathcal C}(\Gamma\otimes A, B)$ and ${\mathcal C}(\Gamma, A \multimap B)$.
This bijection is known as ``currying'', and is a more germane presentation for the lambda calculus.
We define \emph{abstraction}, 
the composition of the unit of the adjunction with the functorial box for $G$, 
as syntactic sugar denoted by a plain box with rounded corners. 
\begin{center}
\diagram{pics/tikzit/ex-sub/pam-abstraction} $:=$
\diagram{pics/tikzit/ex-sub/abstractionab}
\end{center}
This structure for abstraction gives our notion of a \emph{hierarchical string diagram},
which is to say a string diagram which may contain other string diagrams in these boxes. 

\subsubsection{Foliations}

Terms written as string diagrams can be presented in a particular form, which will turn out to lead to some useful insights:
\begin{definition}[Foliations]\label{def:foliation}
A \textit{foliation} is a string diagram written as the sequential composition of a list of diagrams called \textit{leafs}. 
A \textit{singleton leaf} is a diagram consisting of a non-identity atomic string diagram (symmetry, evaluation, operation, contraction, or weakening) or an abstraction, tensored with any number of identities. 
A \textit{maximally sequential foliation} is a foliation comprising only singleton leafs. 
A \textit{maximally sequential hierarchical foliation} is a maximally sequential foliation which is either abstraction free, or in which all abstracted diagrams are also maximally sequential hierarchical foliations. 
\end{definition}
For instance, if $f,g$ are not identities then the maximally sequential foliations of $f\otimes g$ \scalebox{.5}{\diagram{pics/tikzit/components/foliation-fxg}} 
are \scalebox{.5}{\diagram{pics/tikzit/components/foliation-fg}} $(f\otimes id);(id\otimes g)$ and \scalebox{.5}{\diagram{pics/tikzit/components/foliation-gf}} $(id\otimes g);(f\otimes id)$. 
The following is an obvious generalisation of a folklore theorem about monoidal categories.
\begin{lemma}\label{lem:max-seq-foliation}
Any hierarchical string diagram can be written as a (non-unique) maximally sequential hierarchical foliation. 
\end{lemma}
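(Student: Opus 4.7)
The plan is to proceed by structural induction on the hierarchical string diagram. Since every hierarchical string diagram is built up from atomic components (identities, morphisms, symmetries, copy/discard, eval/coeval, and the like) by the three term-forming operations of sequential composition, tensor, and abstraction, it suffices to treat each construction case.

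For the base cases, an identity $\id_A$ is vacuously a maximally sequential foliation (the empty list of leafs, or a single leaf of identities), and any atomic non-identity morphism $f$ is, by definition, a singleton leaf and hence a one-leaf maximally sequential foliation.

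For the inductive cases, suppose we have terms $f$ and $g$ for which the result already holds. For sequential composition $f\poi g$, concatenate the two lists of singleton leafs obtained from the hypothesis; the result is immediately a maximally sequential hierarchical foliation. For the tensor $f\otimes g$, write the hypothesised foliations as $f=f_1\poi\cdots\poi f_n$ and $g=g_1\poi\cdots\poi g_m$, and use bifunctoriality of $\otimes$ (the interchange law) together with the identity axioms to rewrite
\[
f\otimes g \;=\; (f_1\otimes \id)\poi\cdots\poi(f_n\otimes \id)\poi(\id\otimes g_1)\poi\cdots\poi(\id\otimes g_m).
\]
Since each $f_i$ (resp.\ $g_j$) is a singleton leaf, the additional tensoring with an identity preserves the singleton-leaf property (the definition explicitly allows any number of padding identities), so the resulting list is a maximally sequential foliation, hierarchical since the interiors of any abstractions appearing in the $f_i$ and $g_j$ were already in the required form by the induction hypothesis. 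For abstraction, if $h$ is the body then by the induction hypothesis $h$ may be written as a maximally sequential hierarchical foliation; the abstraction $\Lambda(h)$ is itself a singleton leaf by definition, so padding with identities yields a one-leaf maximally sequential hierarchical foliation of $\Lambda(h)$.

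The only non-trivial step is the tensor case, and the mild subtlety there is purely bookkeeping: one must invoke the interchange law and the identity axioms to split a parallel composition into two sequential blocks, and check that the hierarchical side condition is inherited by the padded leafs. Non-uniqueness is evident from the two orderings already exhibited for $f\otimes g$ in the example preceding the statement, so no further argument is needed for that aspect.
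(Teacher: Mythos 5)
Your proof is correct and follows essentially the same route as the paper, which only records the key idea informally: whenever two morphisms are ``level'' one is shifted past the other using identities and the functoriality (interchange law) of the tensor --- precisely your treatment of the $f\otimes g$ case, with the remaining cases being routine structural induction. Your write-up simply makes the paper's one-line sketch explicit; no gap.
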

The proof is straightforward.  
The graphical intuition which underlies the proof is that whenever two morphisms are ``level'' in a diagram one of them can be ``shifted'' using identities, then tensors and compositions can be reorganised using the functorialty of the tensor. 

Foliations are convenient because syntactic transformations can be presented recursively on the foliation. 
This spares us the need to define `big' rules for sequential and tensorial composition. 
Instead only `small' rules for composing a term with a singleton leaf are required. 
This makes transformations easier to specify, and also makes for simpler inductive proofs, using the foliation as a list.

\subsection{Explicit substitution in string diagrams}

In this section we illustrate the use of hierarchical string diagrams to represent the simply typed lambda calculus with explicit substitutions. 
This is an interesting example in its own right, but more importantly it sets the scene for the next section, where we define an automatic differentiation algorithm.
The explicit substitutions play an essential role, as they give us a handle on managing closures, which the AD algorithm requires. 

Hierarchical string diagrams with rules for copying and discarding are a ready-made graphical syntax for the lambda calculus with explicit substitutions~\cite{DBLP:journals/jfp/AbadiCCL91}.
Syntactically, these calculi fall mainly in two categories, those using deBruijn indices or those using named variables. 
The former have better formal properties and their formalisation can be mechanised, but are not a very human-readable notation. 
The latter are easier to read but have some subtle failures of alpha equivalence. 
Formalising alpha equivalence for calculi of explicit substitution is a somewhat tricky problem, the solution of which leads back to rather intricate notations~\cite{DBLP:journals/iandc/FernandezG07}.
String diagrams thus seem like an improved syntax for explicit substitutions, as they are both formal and, we contend, rather readable. 
The graphical notation is variable-free therefore alpha equivalence is not an issue, and other equational properties are also rendered obvious by the diagrammatic representation. 

We pick for comparison a presentation of the calculus of explicit substitutions with named variables~\cite{DBLP:conf/csl/Kesner07}, leaving aside alpha equivalence. 
A $\Lambda$es-term is inductively defined as a variable $x$, an application $t\ u$, an abstraction
$\lambda x.t$ or a \textit{substituted term} $t[x/u]$, where $t$ and $u$ are $\Lambda$es-terms and $x$ a variable. 
The terms $\lambda x.t$ and $t[x/u]$ both bind $x$ in $t$. 
The set of free variables of a term
$t$, denoted $\overline t$ is defined as usual. 

Note that the syntactic object
$[x/u]$, an \textit{explicit substitution},  is not a term because of the way variable $x$ is bound. 
By contrast, in deBruijn formulations of the lambda calculus with explicit substitutions, substitutions are terms. 

The following key equations and reduction rules are considered:
\begin{align}
	t[x/u][y/v] &= t[y/v][x/u] &y\not\in \overline u \land x\not\in \overline v \tag{CE}\\
	(\lambda x.t) u &\rightarrow t[x/u]  \tag{BR} \\
	x[x/u] &\rightarrow u \tag{Var} \\
	t[x/u] &\rightarrow t & x\not\in \overline t \tag{Gc}\\
	(t\ u)[x/v] &\rightarrow t[x/v]\ u[x/v] \tag{App} \\
	(\lambda y.t)[x/v] &\rightarrow \lambda y.t[x/v] & x\neq y \tag{Lamb} \\
	t[x/u][y/v] &\rightarrow t[x/u[y/v]] \tag{Comp}
\end{align}
We qualify these as `\textit{key}' because for more precise resource analysis the (Lamb) and (Comp) rewrites usually are given `linear' forms depending on whether the substituted variable occurs in the term or not. 
For our purposes here we can assume the linear versions are subsumed by the general version and the (Gc) axiom.

\begin{figure}
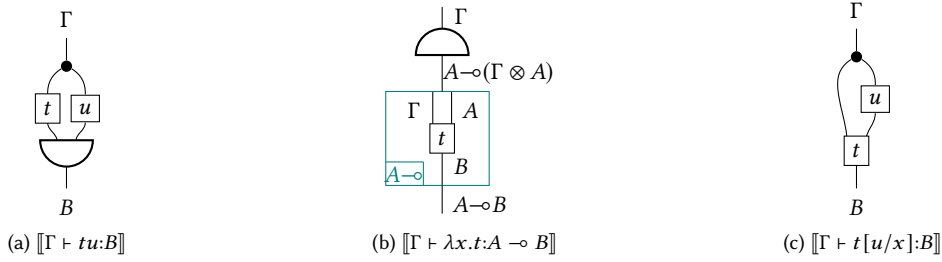

\centering
\subcaptionbox{$\seval{\Gamma \vdash tu{:} B}$}[.3\textwidth]{
\diagram{pics/tikzit/ex-sub/application}}
\hfill
\subcaptionbox{$\seval{\Gamma \vdash \lambda x.t{:} A \multimap B}$}[.3\textwidth]{
\diagram{pics/tikzit/ex-sub/pam-abstraction}}
\hfill
\subcaptionbox{$\seval{\Gamma \vdash t[u/x] {:} B}$}[.3\textwidth]{
\diagram{pics/tikzit/ex-sub/substitution-reflected}}
  \caption{Interpretation of $\Lambda$es-terms}
  \label{fig:interpret}
\end{figure}

The string diagram interpretation of $\Lambda$es is given in Fig.~\ref{fig:interpret}.
The proof of soundness is a straightforward exercise. 
Some of the axioms are simply instances of the  identity (Var), associativity of composition (Comp), and naturality of the contraction (App) or weakening (Gc) --- we leave them as an exercise. 
The two non-trivial axioms and their proofs are in Fig.~\ref{fig:lesmodel}.

\begin{figure}
\centering
\subcaptionbox{Proving (BR) using functoriality of $-\otimes A$, naturality of eval, and adjunction cancellation}{
\qquad\qquad\qquad
\diagram{pics/tikzit/pam/br1} =
\diagram{pics/tikzit/pam/br2} =
\diagram{pics/tikzit/ex-sub/substitution-reflected}
\qquad\qquad\qquad
}
\\[1.5ex]
\subcaptionbox{Proving (Lamb) using naturality of coeval and functoriality of $A\multimap-$}{
\qquad
\diagram{pics/tikzit/pam/lamb1} =
\diagram{pics/tikzit/pam/lamb2} =
\diagram{pics/tikzit/pam/lamb3}
\qquad
}
\caption{Axioms of the $\Lambda$es and their proofs.}
\label{fig:lesmodel}
\end{figure}

Finally, the CE structural rule is also rather interesting, as it requires proving that~\mbox{\scalebox{.4}{\diagram{pics/tikzit/ex-sub/ce-lhs}} = \scalebox{.4}{\diagram{pics/tikzit/ex-sub/ce-rhs}}}.
The proof is an immediate consequence of the functoriality of the tensor and of the identity law. 
What is interesting is that the two diagrams look very similar \emph{as graphs}. 
Indeed, the intuition that diagrams represented by isomorphic graphs denote equal morphisms will be made rigorous in Sec.~\ref{sec:hypernet}. 

To emphasise the syntactic nature of the transformation we will call the objects the \emph{types} of the diagrams. 
Since we are situated in a strict-monoidal setting we will write a composite tensor of objects as a list of types.
We write a generic typed term $t:A_1\otimes\cdots\otimes A_m\multimap A_1'\otimes\cdots\otimes A_n'$ in the language of string diagrams as \opgeneric{t}{t} $:\binomsd{[A_1,\ldots,A_m]}{[A_1',\ldots,A_n']}$.

\section{A graphical AD algorithm}
\label{sec:ad}

This section represents the main technical result of our paper, to define and and prove the soundness of an algorithm for performing reverse-mode automatic differentiation on hierarchical string diagrams. 
The algorithm can be considered a simplified version of that presented in~\cite{DBLP:journals/toplas/PearlmutterS08}. 
This algorithm is remarkable for being one of the first such algorithms that can be applied to code containing closures and higher-order functions. 
It is particularly in the treatment of higher-order features where we draw inspiration from their work. 

The soundness property of the algorithm is technically interesting because simple inductive proofs of correctness do not seem possible.
If simply taking the gradient of a higher-order function, the algorithm is actually unsound. 
However, when taking the gradient of a function with ground-type inputs and outputs only, the results are correct even if the function contains higher-order terms. 

Unlike the original algorithm, however, we \emph{do not provide automatic differentiation as a first-class entity}. 
This means, implicitly, that we also do not have a means to perform `\textit{higher order differentiation}' in the sense of differentiating the differential operator itself. 
In the original work, this was achieved by extending the language with rich runtime reflection capabilities whose formalisation is entirely outside of the scope of our paper. 
Our algorithm instead is formulated as a meta-level set of rules on hierarchical string diagrams or, in actual implementation, on their hypernet representation. 
This is akin to the source-to-source transformation approach to automatic differentiation.

The setting for this algorithm is that of a (strict) cartesian closed category generated from one object $o::=\mathcal{R}$, representing the real numbers, and a collection of primitive operations (addition, multiplication, trigonometric functions, etc.) and their gradients, along with a collection of nullary primitive operations for real constants. 
Among these, real addition \opadd and zero \constzero must be included.
In the string diagrams throughout this section we represent constants as a triangle instead of a box, just for improved readability.
We write \opoplus and \constbarzero for the obvious extension of \opadd and \constzero to bundles.

Each of the provided primitive operations must also come equipped with a \emph{pullback diagram}: for a primitive operation \opgeneric{op}{op} of type $\binomsd{B}{B'}$, its pullback diagram \opgeneric{op^\star}{op-bp} must have type $\binomsd{B :: B'}{B}$.
We make no assumptions about the pullback diagram of an operator, other than its type. 
However, the correctness result in this section will require pullback diagrams to be `correct' implementations of the gradient of the corresponding operation.
\nopagebreak
\subsection{Rewrite rules on string diagrams}

\newcommand{\rewrite}[0]{\mapsto}
The AD algorithm consists of three separate sets of transformations, the application of which we denote by differently coloured boxes around a diagram. We emphasise that these
boxes represent \emph{meta-level transformations}, and are not to be confused with object-level entities such as the rounded rectangles that we use to denote abstraction. 
\pagebreak[3]

To reduce clutter we use coloured boxes rather than labelled frames to indicate string-diagram transformations.
The first transformation, whose only rewrite rule can be found in Fig.~\ref{fig:backprop-sd}, is denoted by a {\color{adj-color} blue} box and is the entry point of the algorithm. Given an 
input diagram with operands of type $B$ and results of type $B'$, this transformation produces an \emph{adjoint} diagram with operands of type $B$ and results of type $B :: [B' \multimap B]$, corresponding to the
result of the original diagram plus an abstraction, the \emph{backpropagator}, that computes the gradient of the original diagram at the point at which the adjoint diagram is evaluated.
In particular, if the original diagram produces a single result of type $\mathcal{R}$, when evaluating the backpropagator at $1.0$ we will obtain the gradient of the original diagram.

\begin{wrapfigure}{r}{0.35\textwidth}
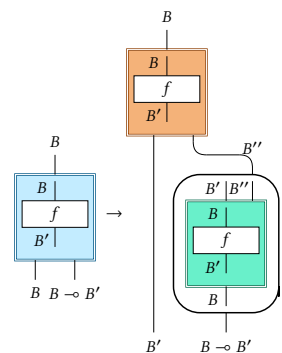

	\centering
  \scalebox{0.7}{\diagram{pics/ad/rewrite/backprop-sd}}
	\caption{Adjoint rule}
	\label{fig:backprop-sd}
\end{wrapfigure}

This transformation consists, in turn, of two components: a \emph{forward pass} transformation (in {\color{fwd-color} orange}), rewrite rules in Fig.~\ref{fig:fwd-sd}) and a \emph{reverse pass} transformation 
(in {\color{rev-color} green}), rewrite rules in Fig.~\ref{fig:rev-sd}). As the naming suggests, these correspond to the forward and reverse passes commonly employed in reverse-mode AD.
The forward pass executes the original function `as is', whereas the reverse pass computes the gradient of every sub-expression, in reverse order of execution. 
In our algorithm, as is usually the case in reverse-mode AD systems, some intermediate values computed during the forward pass are preserved and passed along
to the diagram corresponding to the reverse pass. This is shown in Fig.~\ref{fig:backprop-sd} as a bundle of type $B''$ flowing from the forward-pass computation into the backpropagator.

\newcommand{\type}[1]{\scriptscriptstyle #1}

\tikzsetnextfilename{fwd-rev-rules}
\begin{figure}
  \begin{subfigure}{\textwidth}
    \centering
    \scalebox{0.7}{\diagram{pics/ad/rewrite/fwd-const-sd}}\hfill
    \scalebox{0.7}{\diagram{pics/ad/rewrite/fwd-quiver-sd}}\hfill
    \scalebox{0.7}{\diagram{pics/ad/rewrite/fwd-copy-sd}}\hfill
    \scalebox{0.7}{\diagram{pics/ad/rewrite/fwd-discard-sd}}\vspace{2\baselineskip}
    \scalebox{0.7}{\diagram{pics/ad/rewrite/fwd-lambda-sd}}\hfill
    \scalebox{0.7}{\diagram{pics/ad/rewrite/fwd-eval-sd}}\hfill
    \scalebox{0.7}{\diagram{pics/ad/rewrite/fwd-op-sd}}
    \subcaption{Rewrites defining the forward pass}
    \label{fig:fwd-sd}
  \end{subfigure}\vspace{\baselineskip}
  \begin{subfigure}{\textwidth}
    \scalebox{0.7}{\diagram{pics/ad/rewrite/rev-const-sd}}\hfill
    \scalebox{0.7}{\diagram{pics/ad/rewrite/rev-quiver-sd}}\hfill
    \scalebox{0.7}{\diagram{pics/ad/rewrite/rev-copy-sd}}\hfill
    \scalebox{0.7}{\diagram{pics/ad/rewrite/rev-discard-sd}}\vspace{2\baselineskip}
    \scalebox{0.7}{\diagram{pics/ad/rewrite/rev-lambda-sd}}\hspace{5pt}
    \scalebox{0.7}{\diagram{pics/ad/rewrite/rev-eval-sd}}\hspace{5pt}
    \scalebox{0.7}{\diagram{pics/ad/rewrite/rev-op-sd}}
    \subcaption{Rewrites defining the reverse pass}
    \label{fig:rev-sd}
  \end{subfigure}
  \caption{Forward and reverse pass rewrites}
  \label{fig:fwd-rev-sd}
\end{figure}

The rewrites for the forward-pass transformation, depicted in Fig.~\ref{fig:fwd-sd}, are self-explanatory, as they are limited to constructing a copy of the original diagram. 
Only two cases (those for evaluation and abstraction) merit some attention. 
For abstraction, the diagram enclosed by the bubble is recursively transformed using the blue rule --- that is, any abstraction 
in the primal diagram is replaced by a new abstraction that computes the adjoint of the original one.
Then, when function evaluation in the primal diagram is translated by the forward pass, the result of the adjoint application contains both the result of the original abstraction and a backpropagator which is not used in the forward pass but
is set aside for the reverse pass.

\begin{figure}
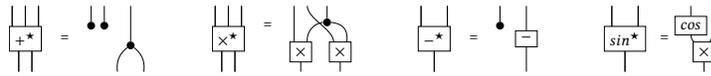

  \centering
  \scalebox{.7}{\diagram{pics/ad/rewrite/bp-plus-sd}}\hspace{15pt}
  \scalebox{.7}{\diagram{pics/ad/rewrite/bp-times-sd}}\hspace{15pt}
  \scalebox{.7}{\diagram{pics/ad/rewrite/bp-minus-sd}}\hspace{15pt}
  \scalebox{.7}{\diagram{pics/ad/rewrite/bp-sin-sd}}
  \caption{Pullback diagrams for some common operations}
  \label{fig:pullback-graphs}
\end{figure}

The rules governing the reverse pass transformation, in Fig.~\ref{fig:rev-sd}, are more involved, so we provide here an intuitive explanation for each. The first rule, which handles constants, states
that constants do not contribute to the gradient of the graph. The second rule computes the gradient of the identity
function to be the identity. 
Contraction is transformed into addition, since the gradient of the diagonal map $\langle\id, \id\rangle$ is the addition of tangent vectors, and weakened variables become zero. 
Each primitive operation is replaced by its corresponding pullback diagram, which receives as additional operands the copies of the inputs to the operation in the forward pass.
This is why we require that every primitive operation 
to be mapped to a pullback diagram of the appropriate type. Some examples of pullback diagrams for common operations can be found in Fig.~\ref{fig:pullback-graphs}.

The reverse pass handling of application and abstraction are, both in the original algorithm and in our interpretation of it, difficult to back up with compelling intuitions, but we shall try our best. 

Remember that the forward pass transforms every abstraction in the primal diagram in order to compute the original value together with a backpropagator.
The latter is captured by the reverse pass in every application rule. 
When rewriting an application node, the reverse pass instead applies the backpropagator given by the forward rule. 
This backpropagator in turn produces a wire for every operand of the body of the original abstraction,
which are swapped into the correct order.
The abstraction rule in the reverse pass then expects the sensitivity of an abstraction to consist of
a bundle of wires corresponding to the sensitivities of each captured wire.

\begin{figure}
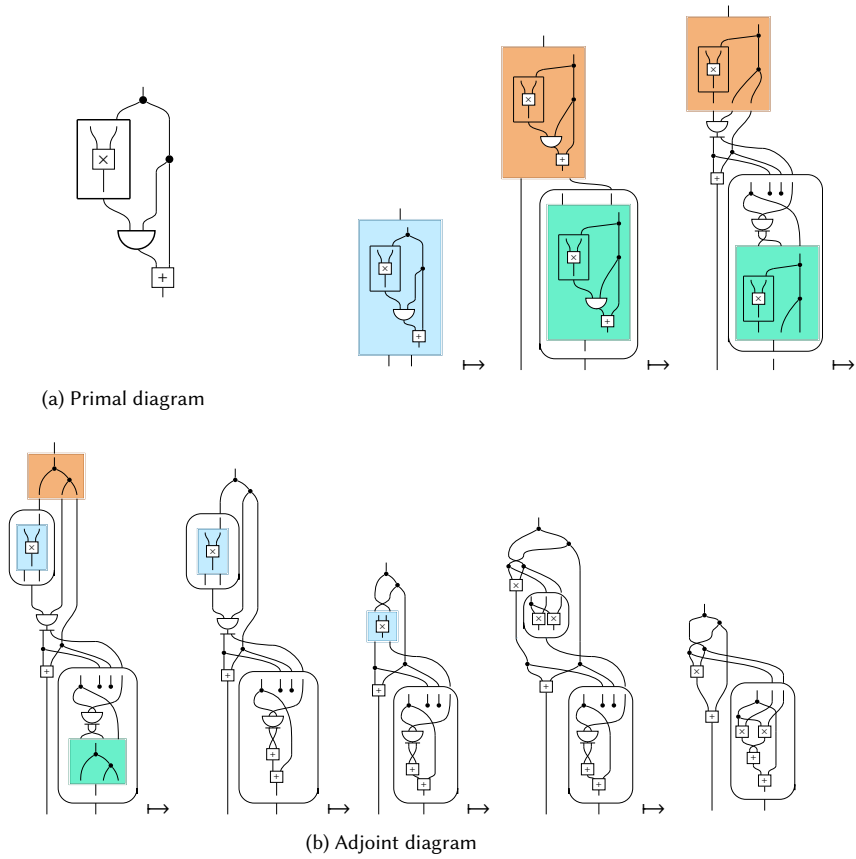

  \subcaptionbox{Primal diagram\label{fig:example-primal}}[0.35\textwidth]{
    \centering
    \scalebox{0.7}{\diagram{pics/ad/example/example-primal}}
  }
  \captionsetup[subfigure]{labelformat=empty}
  \subcaptionbox{}{
    \scalebox{0.4}{\diagram{pics/ad/example/example-1-standalone}}
    $\rewrite$
    \scalebox{0.4}{\diagram{pics/ad/example/example-2-standalone}}
    $\rewrite$
    \scalebox{0.4}{\diagram{pics/ad/example/example-4-standalone}}
    $\rewrite$
  }
  \\
  \setcounter{subfigure}{1}
  \captionsetup[subfigure]{labelformat=parens}
  \subcaptionbox{Adjoint diagram\label{fig:example-adjoint}}{
    \scalebox{0.4}{\diagram{pics/ad/example/example-5-standalone}}
    $\rewrite$
    \scalebox{0.4}{\diagram{pics/ad/example/example-6-standalone}}
    $\rewrite$
    \scalebox{0.4}{\diagram{pics/ad/example/example-7-standalone}}
    $\rewrite$
    \scalebox{0.4}{\diagram{pics/ad/example/example-8-standalone}}
    $\rewrite$
    \scalebox{0.4}{\diagram{pics/ad/example/example-10-standalone}}
  }
  \caption{The AD transformation on a string diagram diagram computing $x^2 + x$
  \label{fig:example-ad}}
\end{figure}

As an example illustrating this algorithm and its handling of closures in particular, we provide in Fig.~\ref{fig:example-primal} a diagram that might result from a program like
\lstinline[columns=fixed]{let mul y = x * y in mul x + x}, with the free variable \lstinline[columns=fixed]{x} corresponding to its single operand. On the right, in 
Fig.~\ref{fig:example-adjoint}, we show the result of applying the adjoint transformation to this diagram (see Appendix~\ref{appendix:animations} for an animated step-by-step derivation). It is 
a mere calculation to check that the resulting backpropagator, when applied to input $1$, can be evaluated to the correct derivative of the polynomial $x^2 + x$.

\begin{lemma}
  The rewriting systems in Fig.~\ref{fig:fwd-sd} and Fig.~\ref{fig:rev-sd} have the diamond property, up to a permutation of the output wires.
\end{lemma}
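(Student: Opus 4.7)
The plan is to reduce the diamond property to local commutation of disjoint redexes. Each rule in Fig.~\ref{fig:fwd-sd} and Fig.~\ref{fig:rev-sd} is keyed to exactly one kind of atomic string-diagram component (constant, identity/quiver, copy, discard, evaluation, abstraction, or primitive operation). In each of the two rewriting systems there is therefore precisely one applicable rule per atomic component, so two distinct rewrites applicable to the same diagram must act at two distinct atomic components; there are no critical pairs of the usual overlapping kind.

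The core of the proof is then to show that two rewrites firing at different atomic components commute up to a permutation of output wires. Given $D \rewrite D_1$ at component $n_1$ and $D \rewrite D_2$ at $n_2$ with $n_1 \neq n_2$, each rule is strictly local: it replaces its atomic component by a fixed right-hand side without modifying the rest of the diagram. Hence the redex at $n_2$ survives the rewrite at $n_1$ (and symmetrically), and both rewrites can be fired in either order to give $D_{12}$ and $D_{21}$. Using Lemma~\ref{lem:max-seq-foliation} to put both diagrams into a common maximally sequential hierarchical foliation, and the functoriality of the tensor to slide the two modified leafs past one another, I can bring $D_{12}$ and $D_{21}$ into the same foliated form.

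The only residual discrepancy is in how each rewrite contributes to the accumulating intermediate-value bundle $B''$ that the forward pass threads into the reverse pass, and symmetrically to the sensitivity bundle consumed by the reverse pass. Both the operation rule and the abstraction rule in the forward pass append fresh wires to this bundle, and the order in which the two rewrites fire determines the order in which their contributions are appended. Thus $D_{12}$ and $D_{21}$ differ exactly by a transposition of the two blocks of wires contributed by the two rewrites, which is the required output-wire permutation; the reverse pass is handled analogously.

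The main obstacle will be the abstraction rule of the forward pass, whose right-hand side recursively invokes the adjoint transformation on the enclosed sub-diagram. When both redexes lie inside separate bubbles they are disjoint and commute trivially; the delicate case is when the second redex lies inside the abstraction targeted by the first. For this I would proceed by induction on the hierarchical depth (equivalently, on the size of the diagram as bounded by its foliation), invoking the diamond property inductively on the strictly smaller enclosed diagram, again up to the same kind of output-wire permutation. The reverse-pass abstraction rule is treated by the same induction.
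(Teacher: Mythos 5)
Your proposal is correct and follows essentially the same route as the paper's proof sketch: since each rule consumes a distinct fringe node, there are no overlapping redexes, and disjoint redexes commute locally up to a permutation of the wires threaded from the forward to the reverse pass. Your additional care about redexes nested inside an abstraction bubble (handled by induction on hierarchical depth) is a point the paper's sketch glosses over, but it is consistent with the foliation-induction method the paper advocates and does not change the overall argument.
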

\begin{proof}[Proof (Sketch)]
  Every rule erases one node in the fringe of the left-hand side diagram, 
  and that no two rules can be applied to erase the same node. Therefore, if two
  rules can apply to the same diagram, it must be the case that they apply to different fringe nodes. It is then easily checked that every pair of such rules
  commutes, modulo a permutation of the wires that are propagated from the forward to the reverse pass. For a concrete example, consider the two sequences of rewrites 
  in Fig.~\ref{fig:diamond}.
\end{proof}

  \begin{figure}[b]
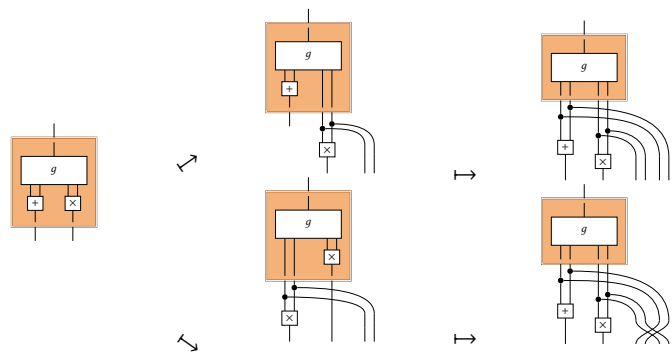

    \centering
    \begin{minipage}{0.3\textwidth}
      \hfill
      \scalebox{0.5}{\diagram{pics/ad/diamond-proof-1}}
    \end{minipage}
    \begin{minipage}{0.3\textwidth}
      \centering
      \rotatebox{35}{$\rewrite$}
      \hspace{15pt}
      \scalebox{0.5}{\diagram{pics/ad/diamond-proof-2-right}}\\
      \rotatebox{-35}{$\rewrite$}
      \hspace{15pt}
      \scalebox{0.5}{\diagram{pics/ad/diamond-proof-2-left}}
    \end{minipage}
    \begin{minipage}{0.3\textwidth}
      $\rewrite$
      \hspace{15pt}
      \scalebox{0.5}{\diagram{pics/ad/diamond-proof-3-right}}\\
      $\rewrite$
      \hspace{15pt}
      \scalebox{0.5}{\diagram{pics/ad/diamond-proof-3-left}}
    \end{minipage}
    \caption{Two possible rewrite sequences stemming from two distinct foliations\label{fig:diamond}}
  \end{figure}

\begin{remark}
The proof above, although very simple, illustrates a proof method that is made possible by using string diagrams: induction on the length of the \emph{foliation} of the diagram (Def.~\ref{def:foliation}). 
The `fringe' mentioned in the proof above is simply the `bottom' (in this case) leaf in the chosen foliation, noting that the foliation is not unique. 
This proof method also benefits additionally from absence of names and all related bureaucratic concerns (free vs. bound, alpha equivalence, capture-avoiding substitution).
\end{remark}

\subsection{Reverse derivative categories}

In order to prove that the algorithm we have given is correct, we need to select an appropriate semantic domain that reflects the behaviour of
the gradient operator from calculus. The obvious choice is the setting of reverse derivative categories \cite{cockett2019reverse}. In simple
terms, these are cartesian categories equipped with a `reverse differential combinator' which behaves, in a suitable sense, like taking the 
gradient of a function in multivariate calculus. For a more thorough treatment and explanation, we refer the reader to \emph{loc. cit.}.
They are defined as follows:

\newcommand{\<}[0]{\left\langle}
\renewcommand{\>}[0]{\right\rangle}
\renewcommand{\R}[0]{\mathsf{R}}
\begin{definition}\cite[Def.~13]{cockett2019reverse}
  A \emph{reverse derivative category} is a cartesian left-additive category endowed with a combinator $\mathsf{R}$ sending each morphism $f : X \to Y$ to
  a morphism $\mathsf{R}[f] : X \times Y \to X$ which satisfies the following conditions:
  \begin{enumerate}[label={\bf [RD.\arabic*]},align=left]
    \item $\R[f+g] = \R[f] + \R[g]$ and $\R[0]=0$
    \item $\R[u] \circ \<f,g+h \> = \R[u] \circ \< f,g \> + \R[u] \circ \<f,h\>$
      and $\R[u] \circ \<f,0\> = 0$
    \item $\R[\id ] = \pi_1$, $\R[\pi_1] = \< \pi_2, 0\>$ and $\R[\pi_2] = \<0, \pi_2\>$
    \item $\R[\<f, g\>] = \R[f] \circ (\id \times \pi_1) + \R[g] \circ (\id \times \pi_2)$
      and $\R[!] = 0$.
    \item $\R[g \circ f] = \R[f] \circ \< \pi_1, \R[g] \circ (f \times \id)\>$
    \item $\pi_2 \circ \R[\R[\R[f]]] \circ (\<\id, 0\>\times \id) \circ \<\id \times \pi_1, 0 \times \pi_2\> = \R[f] \circ (\id \times \pi_2)$
    \item $\pi_2 \circ \R[\R[\pi_2 \circ \R[\R[f]] \circ (\<0,\id\>\times \id)]] \circ (\<\id, 0\> \times \id) $\\ \mbox{}\qquad$
    = \pi_2 \circ \R[\R[\pi_2 \circ \R[\R[f]] \circ (\<0, \id\> \times \id)]] \circ (\<\id, 0\>\times \id) \circ \< \pi_1 \times \pi_1, \pi_2 \times \pi_2\>$
  \end{enumerate}
  \label{def:rdc}
\end{definition}

One caveat of reverse derivative categories is that they do not naturally accommodate higher-order functions. Indeed, there is no natural
notion of a reverse derivative category with exponentials
In contrast, cartesian differential categories which can be extended to differential $\lambda$-categories
\cite{bucciarelli2010categorical} --- cartesian differential categories which are cartesian closed and where the differential combinator is `well-behaved' with
respect to abstraction. This limitation is of no concern to us, however: we do not claim that the AD algorithm in this paper produces correct gradients for arbitrary
higher-order diagrams, only for those whose inputs and outputs have first-order types -- even if they do contain higher-order sub-terms. The first-order setting
of reverse derivative categories is sufficient.

\begin{figure}
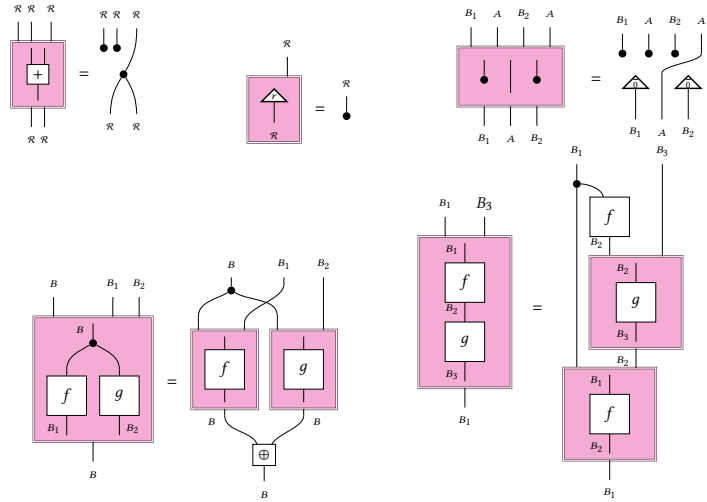

    \centering
    \scalebox{0.7}{\diagram{pics/ad/eqs/grad-add}}\hspace{30pt}
    \scalebox{0.7}{\diagram{pics/ad/eqs/grad-const}}\hspace{30pt}
    \scalebox{0.7}{\diagram{pics/ad/eqs/grad-proj}}\\
    \scalebox{0.7}{\diagram{pics/ad/eqs/grad-pair}}\hspace{20pt}
    \scalebox{0.7}{\diagram{pics/ad/eqs/grad-chain}}
  \caption{Reverse differential axioms as string diagrams}
  \label{fig:grad-sd}
\end{figure}
Henceforth, we will assume that the strict cartesian category generated by the object $\mathcal{R}$ and the collection of primitive operators and pullback
diagrams defined above is a reverse derivative category. We will use the notation \gradgeneric{f}{f-grad} to denote the 
reverse derivative $\R\big[$\opgeneric{f}{f}$\big]$ in diagrammatic form. In addition, we require that this reverse derivative category satisfies:
\begin{itemize}
  \item The $0$ of the left-additive structure coincides with \constzero
  \item The $+$ of the left-additive structure coincides with \opadd
  \item For each primitive operation \opgeneric{op}{op}, we have \opgeneric{op^\star}{op-bp} $=$ \gradgeneric{op}{op-grad}
\end{itemize}
Using this notation, all the equations in Definition~\ref{def:rdc} can be written diagramatically. The graphical translation
of conditions {\bf [RD.1]} and {\bf [RD.3]}-{\bf [RD.5]}, which will be relevant to us later, can be found in Fig.~\ref{fig:grad-sd}.

\subsection{Correctness}

Our proof of correctness proceeds in two steps. First, we prove that our AD transformation is compatible with Beta reduction, that is to say, whenever two diagrams are equivalent modulo
Beta reduction, then so are their adjoints. Then, we show that the AD transformation is correct for diagrams featuring only first-order nodes 
(that is to say, no abstractions or applications). For both of these steps, we will make use of the following technical result, which simply states that the forward and reverse passes
are compositional.

\begin{figure}[h]
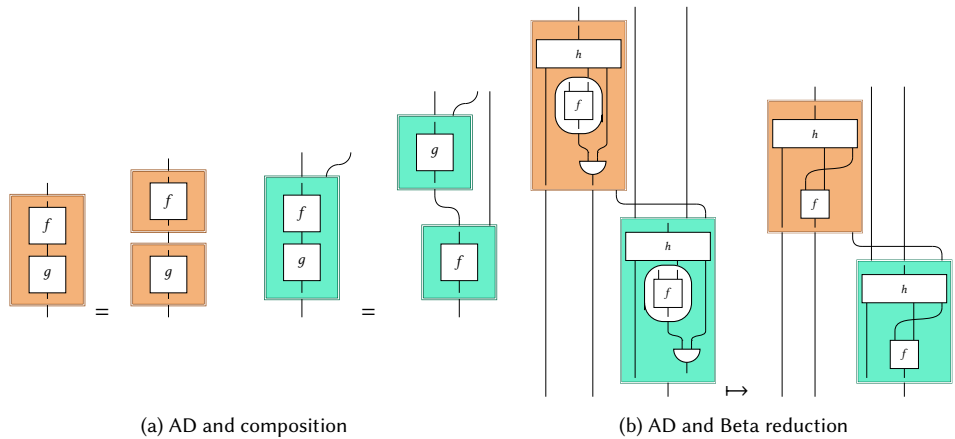

    \centering
    \subcaptionbox{AD and composition\label{eqn:ad-composition}}{
      \vspace{30pt}
      \scalebox{.65}{\diagram{pics/ad/chain-fwd-statement-lhs}}
      $=$
      \scalebox{.65}{\diagram{pics/ad/chain-fwd-statement-rhs}}
      \hspace{10pt}
      \scalebox{.65}{\diagram{pics/ad/chain-rev-statement-lhs}}
      $=$
      \scalebox{.65}{\diagram{pics/ad/chain-rev-statement-rhs}}
    }
    \subcaptionbox{AD and Beta reduction\label{eqn:ad-beta}}{
      \scalebox{.5}{\diagram{pics/ad/beta-local-statement-lhs}}
      $\rewrite$
      \scalebox{.5}{\diagram{pics/ad/beta-local-statement-rhs}}
    }
    \caption{AD is compatible with composition and reduction}
  \end{figure}

\begin{lemma}
  The forward and reverse pass rewriting rules in Fig.~\ref{fig:fwd-rev-sd} satisfy the diagrammatic version of the chain rule, that is to say, 
  Eqn.~\ref{eqn:ad-composition} holds.
\end{lemma}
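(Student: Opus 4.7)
The plan is to prove both chain-rule equations by induction on the length of a maximally sequential hierarchical foliation of the diagram $f$ (using Lemma~\ref{lem:max-seq-foliation} to guarantee such a foliation exists). For each case I will exploit the diamond property just established, so that the order in which fringe leaves are rewritten can be chosen freely: this lets me first reduce every leaf of $f$ and only then turn to the leaves of $g$. The target equation then becomes a purely structural identity about how the produced forward-pass (resp.\ reverse-pass) fragments compose.

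In the base case, $f$ is a singleton leaf and I proceed by a case split on its shape: constant, identity/quiver wire, contraction, weakening, abstraction, evaluation, or primitive operation. For each shape I apply the corresponding rewrite from Fig.~\ref{fig:fwd-sd} or Fig.~\ref{fig:rev-sd} to the leaf of $f$ sitting at the top of the coloured box enclosing $f;g$; this peels off the fragment produced by $f$ and leaves a coloured box around $g$ alone, which is exactly the right-hand side of the equation. The bookkeeping to track is the bundle of intermediate values of type $B''$ that threads from the forward fragment to the reverse fragment — the diagrammatic shape of each rewrite rule has been designed so that each leaf contributes independently to this bundle, so no collision occurs.

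For the inductive step, write $f = f_1 \poi f'$ with $f_1$ a singleton leaf and $f'$ a strictly shorter foliation. Associativity of composition gives $f \poi g = f_1 \poi (f' \poi g)$; applying the base case to $f_1$ separates out the fragment for $f_1$ and leaves a coloured box around $f' \poi g$. The induction hypothesis applied to this inner box factors it as the forward (resp.\ reverse) form of $f'$ composed with that of $g$, and reassembling the two fragments yields the forward (resp.\ reverse) form of the whole $f$ composed with that of $g$. Functoriality of the tensor is used repeatedly to shuffle the $B''$-wires of $f_1$ past those of $f'$ so that they reach the reverse fragment in the correct order.

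The main obstacle is the reverse-pass case, since the reverse rewrite for evaluation consumes a backpropagator produced by the corresponding forward rewrite, and an abstraction in $f$ may capture wires that originate earlier in the composition. The remedy is that the forward and reverse rewrites for abstraction treat the bubble opaquely (recursively applying the adjoint rule but never opening the bubble at the top level), so the $f;g$ decomposition at the enclosing level is unaffected; all that must be checked is that the routing of captured wires and of the $B''$ bundle is preserved across the factorisation, which follows from naturality of symmetry and from the fact that each abstraction's capture set depends only on the abstraction itself and not on its surrounding context.
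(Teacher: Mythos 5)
Your proposal is correct and follows essentially the same route as the paper, which proves the lemma by (in its words) ``a trivial induction on the maximally sequential hierarchical foliation of $f$''; your base-case case split on leaf shapes and the peeling-off of a singleton leaf in the inductive step are just the expected elaboration of that one-line argument.
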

\begin{proof}
  A trivial induction on the maximally sequential hierarchical foliation of $f$.
\end{proof}

\begin{lemma}
  The rewriting rules in Fig.~\ref{fig:fwd-rev-sd} are compatible with beta reduction (Eqn.~\ref{eqn:ad-beta} holds).
  \label{lem:ad-beta}
\end{lemma}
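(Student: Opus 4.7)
The plan is to unfold both sides of Eqn.~\ref{eqn:ad-beta} using the forward- and reverse-pass rewrites, and then to reduce the residual meta-level beta-redex that appears inside the transformed diagram, appealing to the chain-rule lemma just established and to the diagrammatic (BR) axiom proved in Fig.~\ref{fig:lesmodel}. The intuition is that adjoint-transforming an application of an abstraction leaves, at the object level, an application of an abstraction; what remains is to beta-reduce it and align it with the adjoint of the substituted form.

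I will begin with the LHS, the adjoint transform of $(\lambda x.t)\,u$. Applying the adjoint rule of Fig.~\ref{fig:backprop-sd} replaces the outer blue box by its paired forward/reverse passes. The forward pass on the LHS then fires two rewrites in succession: the abstraction rule of Fig.~\ref{fig:fwd-sd} recursively wraps the body in its own adjoint, turning the primal abstraction into an abstraction whose body is blue-boxed; the eval rule of Fig.~\ref{fig:fwd-sd} then applies this new abstraction to the forward transform of the argument, returning a pair consisting of the original result together with a backpropagator. In parallel the reverse pass fires the corresponding eval and abstraction rules of Fig.~\ref{fig:rev-sd}, threading the captured backpropagator through and feeding sensitivities back to the free wires of the body. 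The net effect is that the transformed LHS now contains, both in its forward and its reverse parts, an honest object-level application of an abstraction.

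Second, I beta-reduce these residual redexes using the (BR) axiom of Fig.~\ref{fig:lesmodel}. The forward application of the adjoint-transformed abstraction to the forward transform of $u$ collapses to an explicit substitution, and analogously the backpropagator application in the reverse pass collapses to its substituted form. Up to the permutation of propagated intermediate wires already allowed by the diamond property, what remains is exactly the forward and reverse transforms of $t$ with the transforms of $u$ composed in at the position of the bound variable.

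Third, I will unfold the RHS. Diagrammatically the substituted term $t[x/u]$ is nothing but the sequential composition of $u$ into the $x$-input of $t$ (see Fig.~\ref{fig:interpret}(c)), so the chain-rule equation of Eqn.~\ref{eqn:ad-composition} rewrites its adjoint as precisely the same threaded composite of the adjoints of $t$ and $u$. This matches the reduced LHS, and the desired $\rewrite$ is obtained by concatenating the forward/reverse rewrites with the (BR) collapse. The main obstacle I anticipate is bookkeeping: carefully matching the bundled intermediate type $B''$ that flows from the forward into the reverse pass, and tracking which copies of the captured wires are absorbed by the abstraction rule versus which are consumed by the substitution. Once the two diagrams are aligned modulo the permissible wire permutation, the equality is a direct consequence of (BR) and the chain rule.
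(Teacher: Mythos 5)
Your proposal is correct and follows essentially the same route as the paper: the paper's proof is exactly the ``straightforward application of the rewrite rules'' you describe, carried out as a single pictorial calculation (Fig.~\ref{fig:ad-beta-proof}) in which the forward/reverse abstraction and eval rules are fired, the residual object-level redex is collapsed via the adjunction/naturality equations underlying (BR), and compositionality (the chain-rule lemma) aligns the result with the transform of the reduct. Your prose decomposition into these three steps matches the paper's calculation, including the caveat about permuting the propagated intermediate wires.
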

\begin{proof}
  The proof proceeds by straightforward application of the rewrite rules.
  We provide the calculation in full in Fig.~\ref{fig:ad-beta-proof} (an animated version of which can be be found in Appendix~\ref{appendix:animations}).
  \begin{figure}
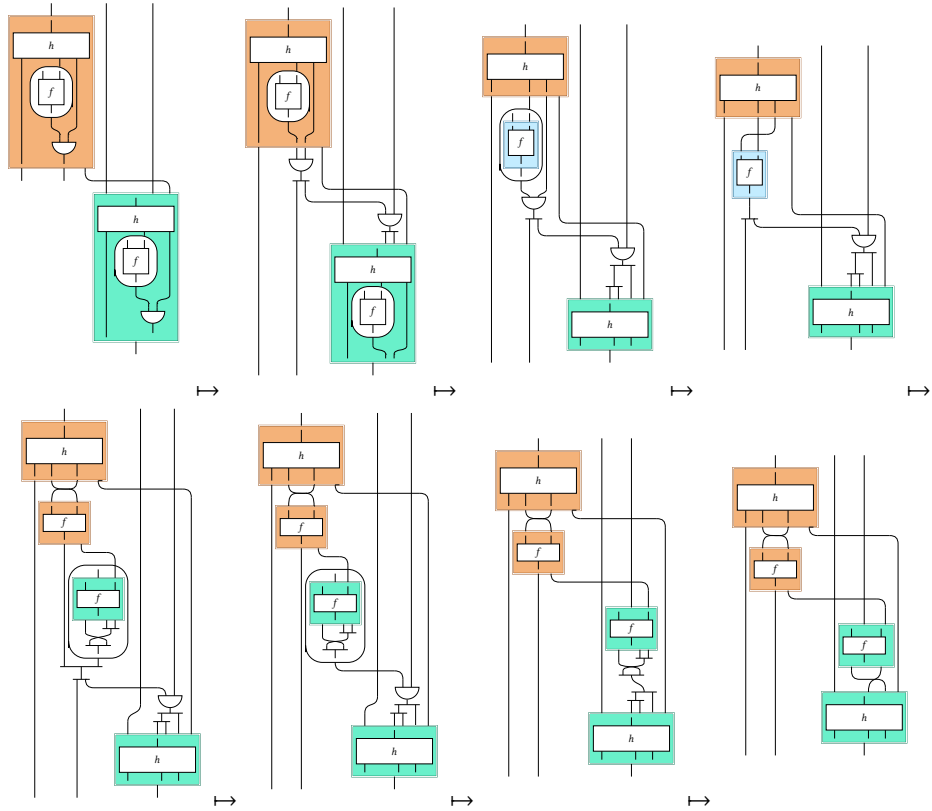

    \centering
    \scalebox{0.45}{\diagram{pics/ad/beta-local-proof-1}}
    $\rewrite$
    \scalebox{0.45}{\diagram{pics/ad/beta-local-proof-2}}
    $\rewrite$
    \scalebox{0.45}{\diagram{pics/ad/beta-local-proof-3}}
    $\rewrite$
    \scalebox{0.45}{\diagram{pics/ad/beta-local-proof-4}}
    $\rewrite$
    \scalebox{0.45}{\diagram{pics/ad/beta-local-proof-5}}
    $\rewrite$
    \scalebox{0.45}{\diagram{pics/ad/beta-local-proof-6}}
    $\rewrite$
    \scalebox{0.45}{\diagram{pics/ad/beta-local-proof-7}}
    $\rewrite$
    \scalebox{0.45}{\diagram{pics/ad/beta-local-proof-8}}
    \caption{Beta-soundness of AD\label{fig:ad-beta-proof}}
  \end{figure}
\end{proof}

\begin{lemma}
  For every diagram \opgeneric{f}{f} whose operands and results are all of a first-order type, there is a Beta-equivalent diagram
  \opgeneric{f'}{f-defunctionalized} that contains no instances of abstraction or evaluation and whose every node has all
  first-order inputs and outputs.
  \label{lem:defunctionalization}
\end{lemma}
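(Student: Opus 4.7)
The plan is to exhibit a rewriting procedure that reduces the diagram $f$ to the desired $f'$ by systematically eliminating every abstraction node, using a combination of beta reduction and the structural rules from Section~2 --- in particular (Gc) for weakening unused abstractions and naturality of contraction for duplicating shared ones. Each step preserves denotation up to Beta reduction, so the resulting $f'$ will be Beta-equivalent to $f$ by construction.

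The key structural observation is that, since every primitive operation is assumed to have first-order type and the input wires of $f$ are first-order by hypothesis, the \emph{only} source of a function-typed wire in $f$ is the output stem of some abstraction node. Tracing such a wire downward, it cannot leave $f$ as a result (results are first-order), so it must terminate at one of three consumers: an evaluation node, a contraction, or a weakening.

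The strategy is then to pick an abstraction whose body contains no further nested abstractions --- one exists whenever any abstraction remains, since the hierarchy is finite --- and dispatch on what its output wire meets. If it meets a weakening, I apply (Gc) to delete the whole abstraction together with any dead closure wires it carries. If it meets a contraction, I push the copy inside the functorial box by naturality of contraction, producing two independent copies of the abstraction. If it meets an evaluation, I perform a (BR)-style beta reduction to inline the body, substituting the operand for the abstracted variable. I then iterate. Once no abstractions remain, the structural observation above rules out any function-typed wires at all, so no evaluation nodes can remain either, and every node has first-order inputs and outputs as required.

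The main obstacle is termination, since the contraction case duplicates subdiagrams and may temporarily increase the count of abstraction nodes. The cleanest resolution is to appeal to strong normalisation of the simply-typed lambda calculus with products and explicit substitutions, of which these hierarchical diagrams are a pictorial rendering (cf.\ Section~2.2): the rewrites above correspond exactly to beta reduction together with administrative manipulation of substitutions, which is known to normalise on well-typed terms. A Beta-normal term whose free variables and result are all of first-order type cannot contain an abstraction --- such an abstraction would have to either be exported through a function-typed output or sit in the head of an unreduced redex, neither of which is possible here --- so the procedure terminates with the desired $f'$.
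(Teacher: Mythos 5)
Your proof is correct and follows essentially the same route as the paper's: both rest on strong normalisation of the simply-typed $\lambda$-calculus for termination, and then argue that a Beta-normal diagram whose operands and results are all first-order cannot retain any abstraction or evaluation node. Your version is in fact a little more explicit than the paper's sketch, since you also trace function-typed wires into contractions and weakenings (handled via naturality and (Gc)) rather than only into evaluations and result wires.
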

\begin{proof}
  Since our graphical language is simply-typed, evaluation of the diagram \opgeneric{f}{f} is guaranteed to terminate,
  following an argument similar to the proofs of strong normalisation for the simply-typed $\lambda$-calculus (such
  as the one in \cite[Chapter 6]{girard1989proofs}), or for interaction nets (e.g. \cite{DBLP:journals/tcs/Mackie00}).
  Such a normal form cannot contain any redexes, and so any application or evaluation node must be connected to an operand 
  or a result wire, which cannot be the case as these have all first-order types.
\end{proof}

\begin{theorem}
  For every diagram \opgeneric{f}{f} whose operands and results are all first-order, Eqn.~\ref{eqn:first-order-soundness} holds.
  \begin{figure}[h]
    \centering
    \subcaptionbox{Adjoint application\label{eqn:first-order-soundness}}{
      \scalebox{.7}{\diagram{pics/ad/soundness-ad}} $=$ \scalebox{.6}{\diagram{pics/ad/soundness-grad}}
    }
    \subcaptionbox{Application after Beta reduction\label{eqn:first-order-soundness-step}}{
      \hspace{25pt}
      \scalebox{.7}{\diagram{pics/ad/ad-soundness-2}} $=$ \scalebox{.6}{\diagram{pics/ad/soundness-grad}}
      \hspace{25pt}
    }
    \renewcommand{\figurename}{Eqn.}
    \caption{First-order soundness for AD}
  \end{figure}
\end{theorem}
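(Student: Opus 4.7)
The plan is to reduce the statement to the first-order fragment via the two preceding lemmas and then perform an induction on the foliation of the resulting diagram. First I would invoke Lemma~\ref{lem:defunctionalization} to replace the given \opgeneric{f}{f} by a Beta-equivalent diagram \opgeneric{f'}{f-defunctionalized} that contains no abstraction or evaluation nodes and whose every internal node is first-order. Since Beta-reduction on the primal corresponds to Beta-reduction on its adjoint by Lemma~\ref{lem:ad-beta}, the left-hand side of Eqn.~\ref{eqn:first-order-soundness} for $f$ rewrites to the adjoint diagram for $f'$. It therefore suffices to prove Eqn.~\ref{eqn:first-order-soundness-step} in the case where the diagram under the blue box has only first-order nodes.

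For that, I would induct on the length of a maximally sequential hierarchical foliation of $f'$ (Definition~\ref{def:foliation}). In the inductive step, write $f' = g \poi h$ where $h$ is the final singleton leaf. The compositionality statement proved earlier (Eqn.~\ref{eqn:ad-composition}) allows us to split the forward and reverse pass transformations of $f'$ into the corresponding transformations applied separately to $g$ and $h$, with the shared intermediate bundle $B''$ wired between them. On the gradient side, the reverse-derivative chain rule {\bf [RD.5]} factors $\R[g \poi h]$ in precisely the matching shape: a forward evaluation of $g$, the reverse of $h$, then the reverse of $g$. Applying the inductive hypothesis to $g$ and to $h$ individually, the two factorisations coincide, and Eqn.~\ref{eqn:first-order-soundness-step} for $f'$ follows.

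The base cases are the singleton leaves permitted in a first-order diagram: identity, copy, discard, a constant, or a primitive operation (tensored with identities). Each case matches directly one of the diagrammatic reverse-derivative axioms in Fig.~\ref{fig:grad-sd}: identity corresponds to the clause $\R[\id]=\pi_1$ of {\bf [RD.3]}, copy is handled by the $\<f,g\>$ clause of {\bf [RD.4]} (specialised to $f = g = \id$), discard by $\R[!]=0$, a constant by $\R[0]=0$ from {\bf [RD.1]}, and each primitive operation by our standing assumption that its pullback diagram \opgeneric{op^\star}{op-bp} coincides with \gradgeneric{op}{op-grad}. For the tensoring with identities, I would appeal to the product clause of {\bf [RD.4]} combined with {\bf [RD.3]}.

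The main obstacle I anticipate is not any single case but the bookkeeping of the intermediate wires that the forward pass hands to the reverse pass, together with the implicit permutations that appear when composing the green boxes. These must be tracked precisely so that, when the inductive hypothesis for $g$ is substituted into the chain-rule decomposition of $f' = g \poi h$, the order of the reverse wires matches the order demanded by {\bf [RD.5]}. Since the AD rewrite system is confluent up to such permutations by the diamond-property lemma, these rearrangements can be absorbed by symmetry naturality, and the inductive step then reduces to a direct diagrammatic calculation with no remaining content beyond the axioms already listed.
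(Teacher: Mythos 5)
Your proposal is correct and follows essentially the same route as the paper: reduce to the abstraction- and evaluation-free case via Lemma~\ref{lem:defunctionalization} and Lemma~\ref{lem:ad-beta}, then induct on the foliation, using the compositionality of the forward/reverse passes against the chain rule {\bf [RD.5]} for the inductive step and matching each singleton leaf to the corresponding axiom in Fig.~\ref{fig:grad-sd}. The paper's proof is simply a terser version of this, exhibiting only the contraction case in full, so your write-up fills in exactly the details the paper leaves implicit.
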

\begin{proof}
  Applying Lemma~\ref{lem:defunctionalization} and Lemma~\ref{lem:ad-beta}, it suffices to consider the case where
  $\opgeneric{f}{f}$ contains no instances of application or evaluation. Applying the rewrite rule in Fig.~\ref{fig:backprop-sd}
  and calculating gives the diagram in Eqn.~\ref{eqn:first-order-soundness-step}.
  The result then follows by induction on the foliation of the diagram $\opgeneric{f}{f}$. We show one case in full.

  \begin{center}
    \scalebox{.55}{\diagram{pics/ad/soundness-dup-grad-1}}
    =\scalebox{.55}{\diagram{pics/ad/soundness-dup-grad-2}}
    =\scalebox{.55}{\diagram{pics/ad/soundness-dup-grad-3}}
    =\scalebox{.55}{\diagram{pics/ad/soundness-dup-grad-4}}
    =\scalebox{.55}{\diagram{pics/ad/soundness-dup-grad-5}}=\\
    \scalebox{.55}{\diagram{pics/ad/soundness-dup-grad-6}}
    =\scalebox{.55}{\diagram{pics/ad/soundness-dup-grad-7}}
    =\scalebox{.55}{\diagram{pics/ad/soundness-dup-grad-8}}
    =\scalebox{.55}{\diagram{pics/ad/soundness-dup-grad-9}}
    =\scalebox{.55}{\diagram{pics/ad/soundness-dup-grad-10}}
  \end{center}
\end{proof}

\section{Hierarchical hypergraphs and rewriting}\label{sec:hypernet}

\renewcommand{\R}[0]{\mathcal{R}}

Even though string diagrams are convenient for mathematical reasoning, 
the actual implementation of string diagram rewriting poses a challenge. 
In order to perform a rewrite step, we need to find a match in a string diagram, but 
the presence of a redex may depend on which representation of the diagram we pick 
among ones that are equivalent according to the laws of symmetric monoidal categories~\cite{DBLP:conf/lics/BonchiGKSZ16}. 
A solution is identifying a data structure interpreting string diagrams
with the property that equivalent representations of a string diagram all have the same interpretation. 
For standard string diagrams in symmetric monoidal categories, 
such a data structure is provided by hypergraphs with interfaces, and 
the interpretation allows to model string diagram rewriting efficiently as \emph{double-pushout rewriting} of the corresponding hypergraphs~\cite{DBLP:conf/lics/BonchiGKSZ16}. 
In this section, we do something similar for our hierarchical hypergraphs: we devise a suitable combinatorial structure, called \emph{hypernets}, and 
show that two hierarchical hypergraphs are interpreted as the same hypernet whenever they are equivalent modulo the laws of symmetric monoidal categories (theorem~\ref{thm:def}). 
This allows us to conclude that string diagram rewriting for hierarchical hypergraphs can be `implemented' as double-pushout rewriting of hypernets (theorem~\ref{thm:rewrite}).

Hierarchical hypergraphs have been used before many times, see e.g.~\cite{DBLP:journals/cuza/BruniGL10,DBLP:journals/jcss/DrewesHP02,DBLP:journals/jcss/Palacz04}.
Our approach is broadly similar, but 
with enough subtle differences that it is necessary to give our own definitions.
For a more detailed comparison, see Sec.~\ref{ssec:related-rewriting}.

\subsection{Hierarchical hypergraphs and hypernets}

A hierarchical hypergraph is a labelled, directed hypergraph 
with a parent relationship which determines the hierarchical structure.
We fix sets of vertex and edge labels $\Sigma_V$ and $\Sigma_E$.
When comparing hierarchical hypergraphs to string diagrams,
$\Sigma_V$ should also be the set of base types in the string diagrams,
while $\Sigma_E$ should be the added operations.

\begin{definition}
A \textit{hierarchical hypergraph} is a tuple $(V, E, s, t, \ell_V, \ell_E, p_V, p_E)$
comprising a finite set of vertices $V$, a finite set of edges $E$,
source and target functions $s, t: E \to V^*$,
labelling functions $\ell_V: V \to \Sigma_V$ and $\ell_E: E \to \Sigma_E + 1$,
and parent functions $p_V: V \to E+1$ and $p_E: E \to E+1$.

While the source, target, and labelling functions are standard for labelled, directed hypergraphs,
we must add some conditions to the the parent functions.
First, we require an edge and any of its source and target vertices to have the same parent:
namely that $p_V(v) = p_E(e) = p_V(v')$ for all $v \in s(e)$ and $v' \in t(e)$ respectively.
Second, the parent relation must be acyclic,
so that repeatedly applying $p_E$ should eventually end up in the right summand of $E+1$.
More precisely, we assume for all $e \in E$ there is some $k \geq 1$ such that $(p_{E, \bot})^k(e) = \bot$ 
where $\bot$ is the element of 1 and
$p_{E, \bot}: E+1 \to E+1$ is the extension of $p_E$ adding $p_{E,\bot}(\bot) = \bot$.
\end{definition}

When the parent of a vertex or edge is the element $\bot$ from the right summand,
we say it is a \emph{outermost vertex} or \emph{outermost edge}.
If the label of an edge is $\bot$, we say (with some abuse) that it is \emph{unlabelled}.
When considering multiple hierarchical hypergraphs, 
we use subscripts to disambiguate these data.

We borrow terms from graph theory for hierarchical hypergraphs.
An important example is that we call a hierarchical hypergraph \emph{connected} when
for every pair of outermost vertices, there is a sequence of edges
(oriented either forward or reverse) joining the two vertices.

\begin{wrapfigure}{l}{0.35\textwidth}
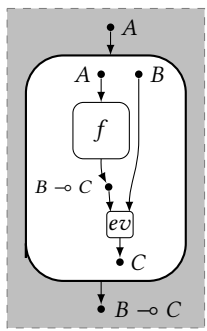

    \centering
    \diagram{pics/tikzit/hg-example}
    \caption{An example hierarchical hypergraph}
    \label{fig:hg-example}
\end{wrapfigure}

In every hierarchical hypergraph $\F$,
associated to every edge $\hat{e}$ is a subgraph consisting of edges $e$ (and vertices $v$)
satisfying $p_{E, \bot}^k(e) = \hat{e}$ (and $(p_{E, \bot})^j(p_V(v)) = \hat{e}$)
for some $k \geq 1$ (and $j \geq 0$).
We denote this subgraph $\F_{\hat e}$ and call it ``the inner hypergraph of $\hat{e}$''.
If a subgraph $\G$ of a hierarchical hypergraph $\F$ has the property that
$\F_e \subseteq \G$ for all $e \in E_\G$, we call $\G$ \emph{down-closed}.
When depicting a hierarchical hypergraph, 
we indicate the inner hypergraph of an edge by nesting the inner subgraph within its edge, 
like abstraction in hierarchical string diagrams.
An example can be seen in Fig.~\ref{fig:hg-example}.

We give an example hierarchical hypergraph in Fig.~\ref{fig:hg-example}.
This hypergraph has six vertices (the six black dots), 
which we name $v_1, \ldots, v_6$ from top to bottom then left to right.
These vertices are labelled as follows:
$\ell_V(v_1) = \ell_V(v_2) = A$, $\ell_V(v_3) = B$, $\ell_V(v_4) = \ell_V(v_6) = B\multimap C$,
and $\ell_V(v_5) = C$.
There are three edges: $e_1$ with label $f$, $e_2$ with label $ev$, and $e_3$ unlabelled but with an inner hypergraph.
The sources and targets of these edges are mostly clear (and mostly one-element lists),
except $s(e_2) = [v_4, v_3]$.
As mentioned above, $e_3$ is the parent edge for most of the graph, so
$p_E(e_3) = p_V(v_1) = p_V(v_6) = \bot$ and 
for all other edges and vertices the parent function returns $e_3$.

\begin{definition}
In a hierarchical hypergraph $\G$, a vertex is 
an \emph{input vertex} if it never occurs as a target,
an \emph{output vertex} if it never occurs as a source,
an \emph{interface} if it is either an input or an output vertex,
and an \emph{isolated vertex} if it is both an input and an output vertex.
\end{definition}

We  think of vertices of hierarchical hypergraphs as representing objects in a category
and edges representing morphisms from the product of the source objects to the product of the target objects.
However, hierarchical hypergraphs are generally much more expressive than string diagrams:
multiple edges can use the same vertex as a source or a target, 
and there could be cycles in the graph.
We will therefore be interested in a more restricted class of hypergraphs, which we call hypernets.

\begin{definition}
A \emph{hypernet} is a hierarchical hypergraph $\H$ with the following additional properties:
(1) acyclicity,
(2) all vertices occur as a source for at most one $e$ (and at most once in $s(e)$),
(3) all vertices occur as a target for at most one $e$ (and at most once in $t(e)$),
(4) there are specified total orderings on the input and output vertices, and
(5) $\ell_E(e) \neq \bot$ if and only if $\H_e$ is the empty hypergraph.
\end{definition}

For now, we return our focus to hierarchical hypergraphs 
and situate them in a category.

\begin{definition}
A \emph{morphism of hierarchical hypergraphs} $\phi: \F \to \G$ 
is a pair of functions $\phi = (\phi_V, \phi_E)$ with
$\phi_V: V_\F \to V_\G$ and $\phi_E: E_\F \to E_\G$.

These functions are required to respect the structure of the
hierarchical hypergraphs in the following senses:
\begin{enumerate}
\begin{multicols}{2}
    \item $\phi_V^* \circ s_\F = s_\G \circ \phi_E$
    \item $\phi_V^* \circ t_\F = t_\G \circ \phi_E$
    \item $\ell_{V, \G} = \phi_V \circ \ell_{V, \F}$
    \item $\ell_{E, \G} = \phi_E \circ \ell_{E, \F}$
    \item $(\phi_E \circ p_{V, \F})(v) = (p_{V, \G} \circ \phi_V)(v)$ if $p_V(v) \in E$
    \item $(\phi_E \circ p_{E, \F})(e) = (p_{E, \G} \circ \phi_E)(e)$ if $p_E(e) \in E$
\end{multicols}
\end{enumerate}
\end{definition}

Note that we do not require that outermost vertices and edges are sent to
outermost vertices and edges, due to the condition in (5) and (6).
If conditions (5) and (6) hold for all $v$ and $e$,
we say the morphism is \emph{strict}.

Hierarchical hypergraphs and the morphisms between them form a category.
This category clearly has finite coproducts given by disjoint union.
We investigate pushouts in this category in order to support double pushout rewriting.
When restricted to strict morphisms, all pushouts exist and can be computed as in $\Set$.
Unfortunately, the category of hierarchical hypergraphs with strict morphisms 
is not expressive enough for the rewriting tasks we require.

When allowing all hierarchical hypergraph morphisms,
the category does not have all pushouts or even pushouts along monos.
This is primarily due to ambiguities in the parents of outermost vertices and edges.
Two non-strict morphisms can embed a graph into two unmergeable parts of different graphs.
However, there are enough pushouts in this category
to support the double pushout structure we need.

In essence, given an arbitrary span $\L \xleftarrow{l} \K \xrightarrow{r} \R$
with the property that the outermost interfaces of $\L$ and $\R$ are isomorphic
together with a (monomorphic) matching $\L \xrightarrow{m} \G$ 
of the leftmost graph in the span in another hierarchical hypergraph,
the next few lemmas give conditions for a unique (up to isomorphism) $\G^-$ and $\H$
completing the following diagram, where all squares are pushouts:
\[\begin{CD}
\L @<[\iota_\L, \L]<< & I + \L @<I + l<< & I + \K @>I + r>> & I + \R @>[\iota_\R, \R]>> & \R \\
@VmVV & @V{n+\L}VV & @V{n+\K}VV & @V{n+\R}VV & @VpVV \\
\G @<\phi<< & \G^- + \L @<{I+l}<< & \G^- + \K @>{I+r}>> & \G^- + \R @>\psi>> & \H
\end{CD}\]

Here $I$ is a copy of the outermost interface vertices of $\L$ (and $\R$)
and $\iota_\L$ (resp.~$\iota_\R$) is the inclusion of these vertices in the graph.
That the inner two squares are the only pushout or pushout complement in this format is straightforward.
Lemma~\ref{lem:po-complement} gives the requirements on $m$ in the leftmost square to make it a pushout complement.
These conditions entail $n$ being a mono, which we will see in Lemma~\ref{lem:pushout} is an important critera to get the existence of a pushout in the rightmost square.

\begin{example}\label{ex:po}
We will illustrate the graph rewriting process with a running example.
For now, we just give an example of the input data we are expecting:
a span and a matching.
We start with a span corresponding to a particular instance of the Abs rule and
a matching of the left-hand side of this span in another hierarchical hypergraph.
\[\diagram{pics/tikzit/dpo/G}
  \xleftarrow{m}
  \diagram{pics/tikzit/dpo/L}
  \xleftarrow{l}
  \diagram{pics/tikzit/dpo/K}
  \xrightarrow{r}
  \diagram{pics/tikzit/dpo/R}\]
To avoid clutter, we omit the vertex labels and do not give a full description
of the morphisms other than to say they are the obvious map preserving edge labels.
The goal of this section is to formally describe how the copy of $\L$ in $\G$
is replaced with a copy of $\R$.
\end{example}

\begin{definition}
Suppose $I$ is a hierarchical hypergraph consisting of only isolated vertices,
and let $\phi: I \to \F$ and $\psi: I \to \G$ be morphisms.
We say $\phi$ and $\psi$ have \emph{complimentary images}
if 
\begin{enumerate}
  \item $p_V(\phi_V(i)) = p_V(\phi_V(j))$ for all $i, j \in V_I$,
  \item $p_V(\psi_V(i)) = p_V(\psi_V(j))$ for all $i, j \in V_I$, and
  \item either $\phi_V(i)$ never occurs as a source and $\psi_V(i)$ never occurs as a target
  or vice versa.
\end{enumerate}
\end{definition}

Though the following result does not completely characterize pushouts in this category,
it gives enough pushouts for us to construct the rightmost square in the diagram above.

\begin{lemma}\label{lem:pushout}
Suppose $I$ is a hypergraph of isolated vertices,
$\phi: I \to \F$ sends all vertices of $I$ to outermost vertices 
in the connected hypergraph $\F$,
and the images of the vertices under $\psi: I \to \G$ have a single common parent.
Then the span $\G + \F \xleftarrow{\psi + \id_\F} I + \F \xrightarrow{[\phi, \id_\F]} \F$
has a pushout.

If further $\phi$ and $\psi$ are monos and they have complimentary images,
then $\F$ and $\G$ being hypernets implies the pushout is as well.
\end{lemma}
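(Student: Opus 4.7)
The plan is to construct the pushout $\H$ explicitly on the underlying sets and then carefully assign the parent function. Since $I$ has only isolated vertices and no edges, the edge set of $\H$ is simply $E_\G \sqcup E_\F$, while the vertex set is $(V_\G \sqcup V_\F)/\!\sim$, where $\sim$ identifies $\psi_V(i)$ with $\phi_V(i)$ for each $i \in V_I$. The source, target, and labelling functions are inherited componentwise, which is well-defined because the identified vertices of $I$ are isolated, so no source/target lists are forced to be merged. The subtle choice is the parent assignment: vertices and edges from $\G$ retain their $\G$-parents, non-outermost vertices and edges of $\F$ retain their $\F$-parents, and the outermost vertices and edges of $\F$ are uniformly reparented to the common parent $e_0$ of $\psi_V(V_I)$ in $\G$ (with the convention that if $e_0 = \bot$, no reparenting is needed).

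Next I would verify that this data is a valid hierarchical hypergraph. The consistency condition --- an edge and its source/target vertices share a parent --- is automatic inside $\G$ and inside the non-outermost part of $\F$. For the outermost part of $\F$, connectedness does the work: every outermost edge of $\F$ has all its sources and targets at the outermost level, so the uniform reparenting of the whole outermost layer of $\F$ to $e_0$ preserves consistency. Acyclicity of the parent relation is preserved because the reparenting pushes outermost $\F$-structure strictly deeper into the $\G$-hierarchy, whose chain already terminates. The two coapex maps are the inclusions into the quotient; they qualify as (non-strict) hierarchical hypergraph morphisms because the definition only demands parent preservation when the source parent is non-$\bot$, and this is exactly what allows the outermost $\F$-structure to be reparented. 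The square commutes by construction, and the universal property follows by lifting the unique set-level mediating map: respect for labels is immediate, and respect for parents away from outermost $\F$-structure is immediate, while at outermost $\F$-structure the cocone condition forces any competing apex to agree with $e_0$ on the identified vertices.

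For the hypernet preservation claim, assume $\phi$ and $\psi$ are monos with complimentary images and that $\F$ and $\G$ are hypernets. Condition (1), acyclicity of the hypergraph, survives because the complimentary-images assumption ensures that the identified vertices connect outputs of one side to inputs of the other, never producing back-edges that could close a cycle. Conditions (2) and (3) --- each vertex appearing at most once as a source or target of an edge --- hold because at each identified vertex exactly one side contributes the source-incidence and the other the target-incidence. Condition (5), that $\ell_E(e) \neq \bot$ iff $\H_e$ is empty, is maintained by observing that if $e_0$ is an actual edge then $e_0$ was already unlabelled in $\G$ (being the parent of $\psi_V(V_I)$), and the reparenting only enlarges its inner subgraph; every other edge has the same inner subgraph as before up to the coapex embedding. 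The total orderings on the outermost interfaces of $\H$ (condition (4)) are obtained by concatenating the orderings from $\G$ and from $\F$, after deleting the identified vertices from the side where they are no longer outermost.

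The main obstacle throughout is controlling the cascading reparenting of outermost $\F$-structure into $e_0$: ensuring that this cascade is globally consistent, acyclic, and respects the hypernet axioms is exactly where the connectedness of $\F$, the isolation of $I$, and the complimentary-images condition earn their keep. Once the reparenting is seen to propagate coherently through the whole outermost layer of $\F$, the rest of the verification reduces to book-keeping with the set-level pushout.
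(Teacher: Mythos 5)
Your proposal is correct and follows essentially the same route as the paper's proof: form the disjoint union of $\F$ and $\G$, identify the images of $I$ under $\psi$ and $\phi$, reparent the outermost layer of $\F$ to the common parent of the $\psi(i)$, and use the mono and complimentary-images hypotheses to see that the identified vertices have at most two representatives with compatible incidences, so the quotient remains a hypernet. Your write-up simply fills in more of the verification (consistency of the reparenting via connectedness, the universal property, and the individual hypernet axioms) that the paper leaves as ``straightforward.''
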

\begin{proof}
The pushout can be formed by taking the disjoint union of $\F$ and $\G$,
then identifying the images of $I$ under the respective maps.
The parent of the outermost vertices and edges of $\F$ is defined to be
the common parent of the $\psi(i)$.
The remaining properties of hierarchical hypergraphs are straightforward.

That $\phi$ and $\psi$ are monos with complimentary images ensures that
when this identification occurs, equivalence classes of vertices have
at most two representatives (one from $\F$ and one from $\G$)
and that the resulting equivalence class is used as an input or an output
by at most one edge from either graph.
This makes the quotient is a hypernet.
\end{proof}

Next we establish the result constructing pushout complements in the leftmost square.

\begin{lemma}\label{lem:po-complement}
Suppose $I$ is a hypergraph of isolated vertices, 
$\phi: I \to \F$ is a bijection of vertices of $I$ with outermost interface vertices
in the connected graph $\F$.
Further suppose $m: \F \to \G$ is a monomorphism with the following properties:
(1) the image of $m$ is down-closed, and
(2) edges in outside the image of $m$ are incident only with
vertices outside the image of $m$ or vertices in $(m \circ \phi)(I)$.
Then there is a unique graph $\G^-$ such that
$(\G^- + \F, n + \F: I + \F \to \G^- + \F, \xi: \G^- + \F \to \G)$ is a
pushout complement to $(m, \phi)$.
Further, $n$ is a monomorphism.
\end{lemma}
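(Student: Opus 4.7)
The plan is to construct $\G^-$ explicitly by deleting the image of $m$ from $\G$, while retaining the interface vertices. Specifically, I would set $V_{\G^-} = V_\G \setminus m_V\bigl(V_\F \setminus \phi_V(V_I)\bigr)$ and $E_{\G^-} = E_\G \setminus m_E(E_\F)$, equipping $\G^-$ with the restrictions of $s_\G$, $t_\G$, $\ell_{V,\G}$, $\ell_{E,\G}$, $p_{V,\G}$, $p_{E,\G}$. Condition (2) of the lemma ensures every edge surviving in $\G^-$ has all sources and targets inside $V_{\G^-}$, so the restrictions of $s$ and $t$ are well defined; condition (1) (down-closedness of the image of $m$) ensures that parents of surviving edges and vertices are either themselves surviving or equal to $\bot$, so the restricted parent functions are well defined. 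Together these give a bona fide hierarchical hypergraph.

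The morphism $n: I \to \G^-$ is defined on vertices by $n_V = (m_V \circ \phi_V)|_I$; its image lies in $V_{\G^-}$ by construction. Because $\phi$ is a bijection onto the outermost interface of $\F$ and $m$ is a monomorphism, $n$ is a monomorphism. The map $\xi: \G^- + \F \to \G$ is the copairing of the inclusion $\G^- \hookrightarrow \G$ with $m$; compatibility of $\xi \circ (n+\F)$ with $m \circ [\iota_\L,\L]$ is immediate from the definition of $n$.

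To verify the square is a pushout, I would observe that $V_\G = V_{\G^-} \cup m_V(V_\F)$ with overlap exactly the interface $n_V(V_I) = (m_V \circ \phi_V)(V_I)$, and similarly for edges by $E_\G = E_{\G^-} \sqcup m_E(E_\F)$; together with $m$ being monomorphic this ensures $\xi$ induces exactly the identifications demanded by the pushout and no more. The universal property then follows in the standard way by copairing cocones. For uniqueness of $\G^-$, I would argue that any other pushout complement must, up to isomorphism, contain precisely $V_\G \setminus m_V(V_\F \setminus \phi_V(V_I))$ as vertices and $E_\G \setminus m_E(E_\F)$ as edges, with all structure functions inherited from $\G$: otherwise pushing out along $I + \F \to \F$ would fail to recover $\G$ on the nose. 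That $n$ is a monomorphism has already been noted.

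The main obstacle is the parent data, which is where the non-strict character of morphisms in this category bites. The outermost vertices and edges of $\F$ may be sent by $m$ to non-outermost items of $\G$ sharing a common parent edge $\hat e$; by down-closedness $\hat e$ lies outside the image of $m$, hence survives in $\G^-$, and the induced map $\F \to \G^- +_I \F$ is non-strict precisely because it re-assigns parents of outermost items of $\F$ to $\hat e$. Showing that the pushout $\G^- +_I \F$ reassembles $\G$ with its parent function intact amounts to checking that this is the unique consistent parent assignment permitted by the common-parent clause of Lemma~\ref{lem:pushout}, which is exactly what conditions (1) and (2) provide.
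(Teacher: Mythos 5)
Your construction of $\G^-$ by deleting $m(\F)\smallsetminus(m\circ\phi)(I)$ from $\G$, with condition (2) playing the role of the dangling condition and down-closedness (1) guaranteeing that the restricted parent functions remain well defined, is exactly the paper's argument (which is itself only a sketch); your additional detail on the pushout verification, uniqueness, and the non-strict reassignment of parents is a faithful elaboration rather than a different route. Correct, same approach.
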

\begin{proof}
Note that condition (2) is the dangling condition from double pushout rewriting,
and that $m$ is a monomorphism strengthens the identification condition.
Hence, it is not surprising to define 
$\G^- := \G \smallsetminus (m(\F) \smallsetminus (m\circ \phi)(I))$.
The wrinkle introduced by hierarchical hypergraphs is the hierarchy:
down-closedness (1) is required in order to remove (most of) the image of $m$ from $\G$
without introducing ambiguity in the parent relationships.
When an edge is in the image of $m$ in $\G$, this condition ensures
all of its children are also in the image, so we do not need to
redefine its parent after deletion.
\end{proof}

\begin{example}
With an understanding of how these pushouts and pushout complements are constructed, 
we now complete example~\ref{ex:po}.
The graphs formed in the pushout diagram are shown in Fig.~\ref{fig:ex-po}.
\begin{figure}[h]
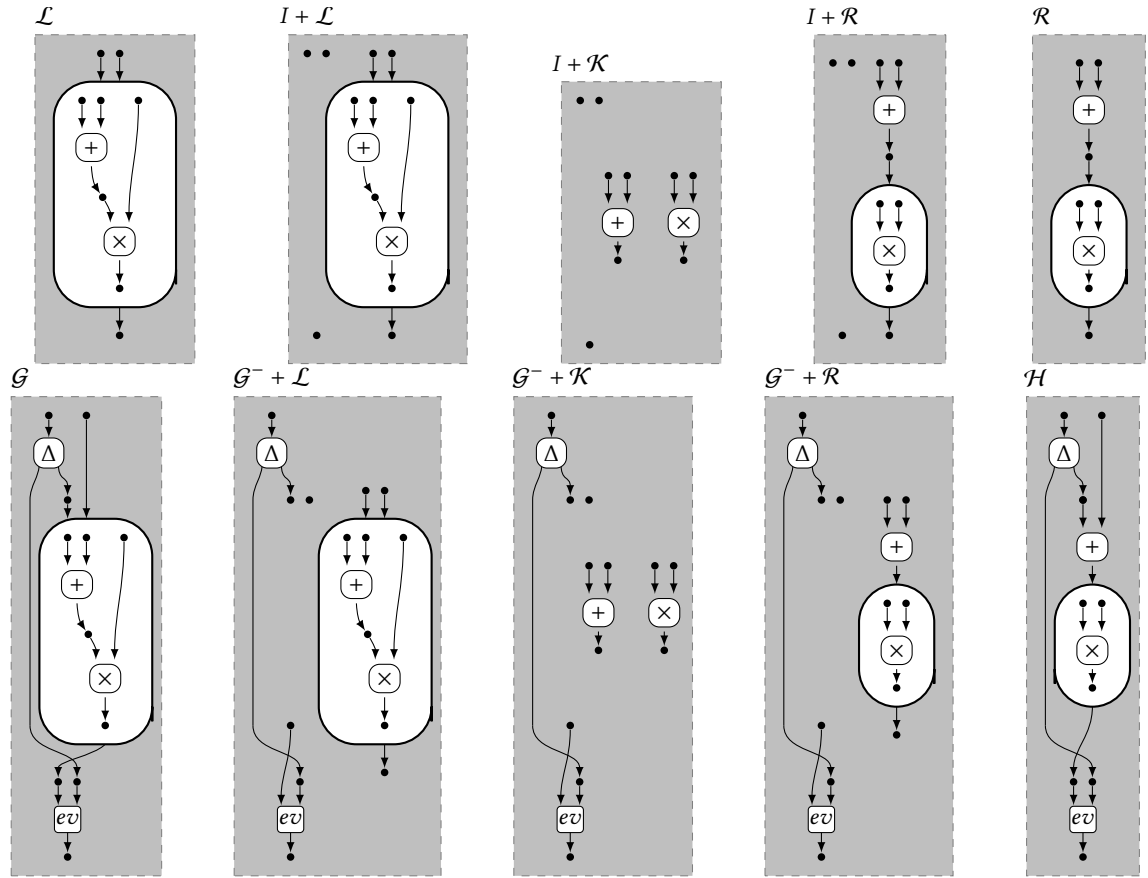

  \centering
  \scalebox{1}{\diagram{pics/tikzit/dpo/L}}
  \hfill
  \scalebox{1}{\diagram{pics/tikzit/dpo/I+L}}
  \hfill
  \scalebox{1}{\diagram{pics/tikzit/dpo/I+K}}
  \hfill
  \scalebox{1}{\diagram{pics/tikzit/dpo/I+R}}
  \hfill
  \diagram{pics/tikzit/dpo/R}\\

  \diagram{pics/tikzit/dpo/G}
  \hfill
  \diagram{pics/tikzit/dpo/G-+L}
  \hfill
  \diagram{pics/tikzit/dpo/G-+K}
  \hfill
  \diagram{pics/tikzit/dpo/G-+R}
  \hfill
  \diagram{pics/tikzit/dpo/H}
  \caption{Pushout rewriting example}\label{fig:ex-po}
\end{figure}

The leftmost pushout (complement) square excises the
matching of $\L$ from the graph in which it is embedded.
The next two squares replace $\L$ with $\R$ 
while the portion of the morphism from $I$ marks where the result should be reinserted in $\G^-$.
Finally, the rightmost pushout glues $\R$ back into $\G^-$.
\end{example}

\begin{remark}
This example illustrates one of the distinctive features of our approach to hierarchical hypergraph rewriting, namely the ability to send outermost vertices (and edges) to inner vertices (and edges). This is crucial, because both legs of the span require it. Equally crucial, but maybe less obvious, is that morphisms can sent outermost vertices (and edges) to images \emph{with different parents}, as seen in the right leg of the span. It is these novel properties that provide the required level of expressiveness we need to formulate our string diagram equations as graph rewrites.
\end{remark}

\subsection{Soundness and completeness}

Now that we understand how hierarchical hypergraphs can be rewritten,
we turn to the the connection between string diagrams and hypernets.
Suppose we fix a set of basic types $\Sigma_V$ and a set of operations $\Sigma_E$
from which the string diagrams of section 2 are built.
From these basic types, generate types according to the rules of~\ref{sec:types}.
We restrict vertex labels for our hypernets to these atomic types.
Just as with string diagrams, there is a notion of well-typedness for hypernets.

\begin{definition}
  A \emph{well-typed} hypernet $\H$ satisfies the following properties.
  (1) If $\ell_E(e)$ is an operation, then the types of $s(e)$ match the input types of this operation
  and similarly the types of $t(e)$ match the output types of $\ell_E(e)$.
  (2) If $\ell_E(e) = \bot$, let $L_i$ be the types of the list of input interfaces in $\H_e$ in the interface order given on the hypernet.
  Similarly let $L_o$ be the list of types on the output interfaces of $\H_e$.
  Then there is a list of types $L_a$ such that $s(e) :: L_a = L_i$ and $t(e) = (L_a \multimap L_o)$.
\end{definition}

As we will see, though property (4) of hypernets 
(a specified total ordering on sources and targets)
is not so important in rewriting, it is necessary to establish a typing for hypernets.
To reflect this, we depict the interface ordering graphically by 
drawing a copy of the interface in the specified ordering left-to-right in a blue box 
and the correspondence between the ordered interface and the hypergraph.

\begin{figure}[ht]
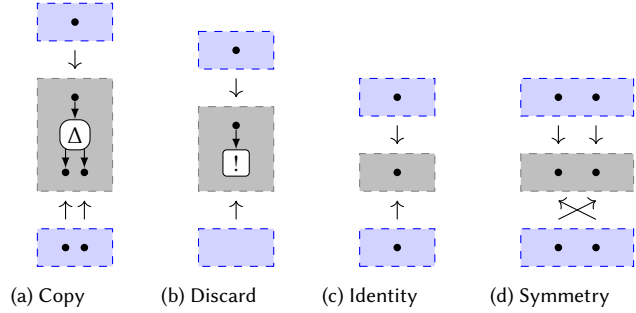

\subcaptionbox{Copy\label{fig:copy-hg}}{
\quad \diagram{pics/tikzit/components/copy-hg}}\quad
\subcaptionbox{Discard\label{fig:discard-hg}}{
\quad \diagram{pics/tikzit/components/discard-hg}}\quad
\subcaptionbox{Identity\label{fig:id-hg}}{
\quad \diagram{pics/tikzit/components/id-hg}}\quad
\subcaptionbox{Symmetry\label{fig:swap-hg}}{
\quad \diagram{pics/tikzit/components/swap-hg}}\quad
\caption{Interpretation of string diagrams in hypernets}
\label{fig:hg-interpretation}
\end{figure}
Every string diagram has an interpretation as a hypernet with these labels, 
inductively defined based on a(ny) decomposition of the string diagram.
For base cases, contraction, weakening, evaluation, and any of the basic operations 
are interpreted as a single hyperedge with the corresponding label from $\Sigma_E$.
Identity and unary swap are slightly subtle: 
in the hypernet representation they do not require an edge.
Rather, they are graphs of isolated vertices 
with different input and output orderings,
as show in Figures~\ref{fig:id-hg} and \ref{fig:swap-hg}.
For induction steps, compositions and abstraction are as expected,
with abstraction taking advantage of the hierarchical structure.
We write $\seval S$ for the interpretation of a string diagram $S$ under this scheme.

Note that this interepretation scheme absorbs 
the equations required of a symmetric monoidal category.
Sequentially composing an identity hypernet with any other hypernet
(that can be composed with that identity)
results in a hypernet isomorphic to the given hypernet.
Similarly, associativity of compositions and tensoring with the unit
also yield isomorphic hypernets.
Finally, the two sides of the equations for
naturality and idempotency of symmetry, when interpreted under this scheme,
also result in isomorphic hypernets.

In the other direction, we can show that every hypernet arises as
the interpretation of a string diagram, and further that
all string diagrams which have a given hypernet as their interpretation
are equivalent modulo the equations of symmetric monoidal categories.
Before we do this, it is useful to establish a result about hypernets 
similar to the foliation decomposition of lemma~\ref{lem:max-seq-foliation}.

\begin{lemma}\label{lem:before-during-after}
Let $\H$ be a hypernet, and $\G \subseteq \H$ be a connected, down-closed, outermost-level subnet.
Then there is a decomposition of $\H$ into the sequential composition of
(1) a hypernet, (2) the parallel composition of $\G$ with some identity hypernets,
and (3) another hypernet.
\end{lemma}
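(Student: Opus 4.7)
The plan is to use the acyclicity of $\H$ to partition the outermost edges of $\H$ relative to $\G$, then cut $\H$ along two ``horizontal slices'' just above and just below $\G$, yielding the three pieces (1), (2), (3) whose sequential composition recovers $\H$.

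First, I would classify each outermost edge $e \in E_\H \setminus E_\G$. Call $e$ a \emph{predecessor} of $\G$ if there is a directed path in $\H$ from some vertex of $t(e)$ to an input vertex of some edge in $E_\G$, and a \emph{successor} if there is a directed path from an output vertex of some edge in $E_\G$ to a vertex of $s(e)$. Since $\H$ is acyclic and $\G$ is connected, the three sets $E_\G$, predecessors, successors are pairwise disjoint; any outermost edge unrelated to $\G$ can be put in either group, say predecessors. Because $\G$ is down-closed and outermost-level, the parent relation cleanly segregates: each outermost edge carries its entire down-closed interior with it, and these interiors do not intersect $\G$.

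Next I would define the two cut vertex-sets $C_1$ and $C_2$. The \emph{upper cut} $C_1$ consists of all vertices that are either (a) input interface vertices of $\H$ that are consumed by $E_\G$ or by a successor or by an output interface, (b) outputs of a predecessor that are consumed inside $E_\G$, by a successor, or by an output interface. Dually for the \emph{lower cut} $C_2$. The ``before'' hypernet $\H_1$ consists of the inputs of $\H$ and all predecessors (with their interiors), with output interface $C_1$ ordered so that the interface vertices of $\G$ form a contiguous block. The ``middle'' hypernet is $\G$ placed in parallel with identity wires on those vertices in $C_1$ that bypass $\G$ (i.e.\ pass straight through to $C_2$). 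The ``after'' hypernet $\H_2$ consists of all successors with their interiors and the outputs of $\H$, with input interface $C_2$ ordered compatibly.

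The sequential composition of these three pieces is constructed by identifying matching interface vertices, which is exactly the inverse of the cutting operation; up to isomorphism this recovers $\H$. The hypernet conditions (1)--(5) are preserved because each original edge of $\H$ ends up in exactly one of the three pieces together with its full down-closed interior, each cut vertex appears in at most one source-list and one target-list within each piece, and acyclicity is inherited from $\H$. The main obstacle, and the only place where real care is needed, is choosing the total orderings on $C_1$ and $C_2$ so that (i) $\G$'s input interface forms a contiguous block matching its specified ordering inside the middle piece, (ii) the identity wires on the bypassing vertices line up between $\H_1$'s outputs and $\H_2$'s inputs, and (iii) the remaining orderings are compatible with the ambient interface of $\H$. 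This is possible precisely because $\G$ is connected and lives entirely at the outermost level, so its interface vertices can be grouped together in any linear order without disturbing the DAG structure of the remaining edges; the resulting slight reordering of wires is absorbed by adjacent symmetries, which are part of the symmetric monoidal equations already implicit in the hypernet interpretation.
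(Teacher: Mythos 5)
Your construction is essentially the one the paper uses: its proof just topologically sorts $\H$ and cuts at the last edge of $\G$, which is the same slicing idea you carry out in more detail, and your extra care about the cut interfaces, the bypass identities, and absorbing the reordering into symmetries is welcome, since the paper glosses over exactly those points.

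There is, however, one load-bearing claim in your argument that does not hold as stated: that acyclicity of $\H$ together with connectedness of $\G$ makes the predecessor and successor sets disjoint. Connectedness in this paper is undirected (a zig-zag of edges traversed forward or backward), so it does not rule out a directed path that leaves $\G$ and re-enters it. Concretely, take $\G = \{g_1 : A \to B \otimes C,\ g_2 : C \otimes D \to E\}$, connected through the shared vertex $C$, and an outside edge $e : B \to D$. Then $\H$ is acyclic, yet $e$ is simultaneously a successor of $g_1$ and a predecessor of $g_2$, and no decomposition of the required shape exists: $e$ must be scheduled strictly between two edges that the middle layer forces into parallel. What your argument actually needs is \emph{convexity} of $\G$ (every directed path between edges of $\G$ stays inside $\G$), which the lemma does not hypothesize and which your appeal to connectedness does not supply. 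To be fair, the paper's own proof has the same defect --- its ``before'' piece can contain an edge that consumes an output of $\G$ --- so this is a gap in the statement that both proofs inherit; in the lemma's actual uses ($\G$ a single outermost edge in Theorem~\ref{thm:def}, or a rewrite match) convexity holds and both arguments go through. If you add convexity as an explicit hypothesis, your proof is complete and, on the interface bookkeeping, more so than the paper's.
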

\begin{proof}
Topologically sort the hypernet. 
The edges after the last edge of $\G$ form (3).
Edges before the last of $\G$ but not including any edges from $\G$ form (1).
Finally, any edges of $\G$, together with new vertices (identity hypernets) 
for any outputs of (1) not used in $\G$ form (2).
\end{proof}

Note that this decomposition is certainly not unique.

\begin{theorem}[Hierarchical Definability]\label{thm:def}
Let $H$ be well-typed hypernet. 
There exists a string diagram $S$ such that $\seval S = H$. 
If $S'$ is any other string diagram with the property that $\seval {S'} = H$,
then $S =_{SMC} S'$.
\end{theorem}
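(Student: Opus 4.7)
The plan is to prove existence and uniqueness separately, both by induction, leveraging the decomposition results already established in the paper.

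For existence, I would induct lexicographically on the pair (hierarchical depth of $H$, total number of edges of $H$). In the base case, a hypernet with no edges consists of isolated vertices each of which is simultaneously input and output; by the total orderings on interfaces such a hypernet is realized as a permutation of wires built from identities and symmetries. For the inductive step, I would apply Lemma~\ref{lem:before-during-after} with the distinguished subnet $\G$ chosen to be a single topologically maximal outermost edge $e$ together with its incident vertices. This exhibits $H$ as a sequential composition of three hypernets, each of which is strictly smaller: the two outer hypernets have fewer outermost edges, and if $e$ is unlabelled then its body $H_e$ has strictly smaller hierarchical depth. By the induction hypothesis each outer piece is $\seval{-}$ of some string diagram. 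The middle piece is a direct translation of $e$: if $e$ is labelled by an atomic operation, contraction, weakening, or evaluation, take the corresponding atomic string diagram symbol; if $e$ is unlabelled then well-typedness (2) forces $e$ to be an abstraction, and applying the induction hypothesis to $H_e$ and wrapping the resulting diagram in an abstraction bubble produces the required leaf.

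For uniqueness, given $S, S'$ with $\seval S = \seval{S'} = H$, I would first apply Lemma~\ref{lem:max-seq-foliation} to rewrite both diagrams as maximally sequential hierarchical foliations. Each singleton leaf contributes a single non-identity atomic box surrounded by identity wires; under $\seval{-}$ this corresponds to a single hyperedge (possibly an unlabelled, hierarchical one) of $H$. Hence each foliation induces a linear extension of the partial order that $H$ imposes on its edges via sources and targets. I would then show by induction on foliation length that any two such linear extensions are SMC-equivalent: when two adjacent leaves of $S$ correspond to hyperedges that are concurrent (no dependency in $H$), they can be swapped past one another using naturality of symmetry, involutivity of swap, and functoriality of tensor; when a leaf of $S$ and a leaf of $S'$ correspond to the same hyperedge of $H$, the surrounding context wires can be aligned by the same equations. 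When the matched leaves are abstractions, the inductive hypothesis of uniqueness, applied to the interior hypernet $H_e$, gives SMC-equivalence of the bodies, which lifts through the abstraction frame by functoriality of the $A\multimap-$ box and congruence.

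The main obstacle is the uniqueness step, specifically the bookkeeping of permutations of context wires around each leaf. A single hyperedge of $H$ can be realized diagrammatically with many different braidings of the ``in flight'' wires surrounding it, and different foliations of the same $S$ may themselves choose different braidings; one must show that any two such choices that agree on the underlying hypernet are related by the naturality and involution laws for $\gamma$ together with the coherence of the strict monoidal structure. A secondary subtlety is threading the abstraction-context list $L_a$ from the well-typedness condition consistently through the induction, so that the inductive hypothesis on $H_e$ yields a body whose free wires glue back into the outer diagram in the unique way forced by the interface ordering on $H_e$.
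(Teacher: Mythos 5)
Your existence argument is essentially the paper's own: induct on a size measure, handle the edgeless case by generating the interface permutation from adjacent transpositions (unary swaps), and otherwise peel off one outermost edge using Lemma~\ref{lem:before-during-after}, translating a labelled edge to the corresponding generator and an unlabelled edge to an abstraction wrapped around the (inductively obtained) diagram of its interior. The only cosmetic difference is the measure: the paper inducts on the total number of edges, which already decreases when passing to the interior of an abstraction edge since that interior excludes the edge itself, whereas you use a lexicographic pair with hierarchical depth; both work. Where you genuinely go beyond the paper is uniqueness: the paper's proof stops after existence and never argues that two diagrams with the same hypernet interpretation are SMC-equal, even though the theorem claims it. Your sketch --- normalise both diagrams to maximally sequential hierarchical foliations via Lemma~\ref{lem:max-seq-foliation}, observe that each foliation induces a linear extension of the dependency order on the edges of $H$, connect any two linear extensions by adjacent transpositions of concurrent leaves (justified by naturality of symmetry, involutivity, and functoriality of the tensor), and recurse into abstraction bodies --- is the standard and correct route, in the spirit of the completeness results for non-hierarchical hypergraphs in \cite{DBLP:conf/lics/BonchiGKSZ16}. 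You are also right to flag the remaining work: one must check that two singleton leaves realising the same hyperedge but with different braidings of the bystander wires are identified by the symmetry axioms, which is exactly the coherence statement for strict symmetric monoidal categories, and that the interface orderings on $H_e$ pin down how an abstraction body reattaches to its context. As a proposal this is sound; as a finished proof it would need that coherence step spelled out, but it already covers more of the theorem than the proof the paper actually gives.
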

\begin{proof}
For existence, we induct on the number of edges of the hypernet.
If the hypernet has no edges, the output ordering is some permutation of the input ordering.
Since all permutations are generated by the set of adjacent transpositions,
there is a combination of unary swaps which has this hypernet as its interpretation.

If the hypernet has at least one edge, it has a outermost-level edge. 
Let $e$ be an arbitrary outermost-level edge, and $\G$ be the hypernet inside $e$ (if there is one).
By lemma~\ref{lem:before-during-after}, we can decompose this hypernet into three pieces.
Hypernets (1) and (3) do not contain $e$,
so by the induction hypothesis they have a string diagram representation.
If (2) has a basic label, it is the interpretation of the corresponding symbol in the signature.
If (2) does not have a basic label (and thus has an interior hypernet),
the induction hypothesis again tells us the interior is the interpretation of a string diagram.
Then $e$ is the interpretation of the abstraction of that string diagram.
The sequential composition of these three string diagrams is then a string diagram represented by $H$.
\end{proof}

We denote the string diagram (up to symmetric monoidal axioms) 
of a hypernet $H$ as $\llparenthesis H\rrparenthesis$.

\begin{lemma}
Let $\H$ be a hypernet. 
For every connected subnet $\G \subseteq \H$ which contains all its descendents
(as in assumption (1) of Lem.~\ref{lem:po-complement}),
there is a decomposition of the string diagram $\llparenthesis \H \rrparenthesis$
which includes the string diagram $\llparenthesis \G \rrparenthesis$.
\end{lemma}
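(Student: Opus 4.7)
The plan is to induct on the hierarchical depth of $\G$ within $\H$, using Lemma~\ref{lem:before-during-after} as the workhorse at each level and peeling off one layer of abstraction in the inductive step.

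Base case: $\G$ sits at the outermost level of $\H$. Since $\G$ is connected and down-closed, Lemma~\ref{lem:before-during-after} applies directly and gives a decomposition $\H = \H_1 \,;\, (\G \otimes \id_{X}) \,;\, \H_3$ for some hypernets $\H_1, \H_3$ and some list of types $X$ supplying identity wires. Applying the interpretation $\llparenthesis - \rrparenthesis$ of Theorem~\ref{thm:def} (which is well-defined up to SMC axioms) to each factor yields a decomposition of $\llparenthesis \H \rrparenthesis$ in which $\llparenthesis \G \rrparenthesis$ appears literally as a tensor factor of the middle term. This uses nothing beyond the inductive definition of $\seval{-}$ and the fact that SMC-equivalent decompositions of the factors compose to SMC-equivalent decompositions of $\H$.

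Inductive step: $\G$ is nested inside some edge at positive depth. Because $\G$ is connected and down-closed, all of its outermost vertices and edges share a common parent edge $e'$ of $\H$; let $e$ be the outermost ancestor edge of $e'$ (equivalently, the unique outermost-level edge of $\H$ whose inner hypernet $\H_e$ contains $\G$). Apply Lemma~\ref{lem:before-during-after} to $\H$ with the outermost-level subnet consisting of just $e$ together with its incident interface vertices; this produces a decomposition $\H = \H_1 \,;\, (\{e\} \otimes \id_X) \,;\, \H_3$. Now $\G$ sits at strictly smaller hierarchical depth inside $\H_e$ and is still connected and down-closed there, so the inductive hypothesis gives a decomposition of $\llparenthesis \H_e \rrparenthesis$ which includes $\llparenthesis \G \rrparenthesis$. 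Since by construction $\llparenthesis e \rrparenthesis$ is an abstraction bubble enclosing $\llparenthesis \H_e \rrparenthesis$, this internal decomposition lifts through the abstraction: it refines the sub-diagram inside the rounded box of $\llparenthesis e \rrparenthesis$, and therefore refines the overall decomposition of $\llparenthesis \H \rrparenthesis$ into one that contains $\llparenthesis \G \rrparenthesis$ (inside an abstraction, in this case).

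The main conceptual obstacle is getting the statement's notion of ``decomposition includes $\llparenthesis \G \rrparenthesis$'' to cover the nested case: when $\G$ lives under an abstraction, its string diagram occurs not as a top-level sequential/parallel factor of $\llparenthesis \H \rrparenthesis$ but as a sub-diagram sitting inside a rounded-box operator. One needs to read ``decomposition'' in the sense allowed by the hierarchical string diagram calculus — where the body of an abstraction is itself a diagram admitting its own decomposition — so that the inductive hypothesis applied to $\H_e$ can be lifted through the abstraction operator. Everything else is routine bookkeeping: uniqueness of the outermost ancestor edge $e$ follows from connectedness plus down-closedness, and compatibility of the interpretation with the three-part sequential decomposition of Lemma~\ref{lem:before-during-after} is immediate from the inductive definition of $\seval{-}$.
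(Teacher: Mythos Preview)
Your proof is correct and follows essentially the same approach as the paper, which simply says ``Combine Lem.~\ref{lem:before-during-after} and Thm.~\ref{thm:def}.'' Your induction on hierarchical depth is the natural way to unpack that one-line hint, since Lemma~\ref{lem:before-during-after} only applies to outermost-level subnets and Theorem~\ref{thm:def} already recurses into the hierarchy; you are just making explicit what the paper leaves implicit.
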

\begin{proof}
Combine Lem.~\ref{lem:before-during-after} and Thm.~\ref{thm:def}.
\end{proof}

Axioms of the Cartesian structure are not ``baked in'' to the hypernet structure
in the way the properties of symmetric monoidal structures are.
Instead, these equations are modeled by bidirectional rewriting rules.

\begin{theorem}\label{thm:rewrite}
Every equation of Cartesian closed categories has a
corresponding bidirectional rewrite rule.
That is, for every equation $L = R$ of Cartesian closed categories (expressed as string diagrams)
there is a rewrite rule $\seval L \leftrightarrow \seval R$
such that applying this rule in a hypernet
\end{theorem}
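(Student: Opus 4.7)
The plan is to realise each equation $L=R$ of the CCC calculus as a span $\seval{L} \xleftarrow{\iota_L} K \xrightarrow{\iota_R} \seval{R}$, where $K$ is the hypernet consisting solely of isolated vertices labelled by the shared outermost interface types of $L$ and $R$, and $\iota_L,\iota_R$ embed those vertices onto the (ordered) outermost input/output vertices of $\seval{L}$ and $\seval{R}$. Bidirectionality is immediate: swapping the legs gives a valid span in the opposite direction, so both orientations are legitimate rewrites.

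Given such a span, and a monomorphic match $m\colon \seval{L} \to \G$, I would first invoke Lemma~\ref{lem:po-complement} to build the pushout complement $\G^-$ on the left, provided the image of $m$ is down-closed and the dangling condition holds relative to $(m\circ \iota_L)(K)$. Because $\iota_L$ is a bijection from $K$ onto the outermost interface of $\seval{L}$, the induced $n\colon K\to \G^-$ is a monomorphism whose image consists exactly of the freshly-exposed interface vertices of $\G^-$. I would then apply Lemma~\ref{lem:pushout} to the span $\G^-\xleftarrow{n} K \xrightarrow{\iota_R} \seval{R}$: the complementary-images hypothesis is satisfied because $n$ targets vertices that are either only sources or only targets in $\G^-$, while $\iota_R$ targets vertices that are only targets or only sources in $\seval{R}$, matched up by interface orientation. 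This produces the replacement hypernet $\H$.

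For soundness, I would use Definability (Thm.~\ref{thm:def}) together with the decomposition lemma immediately preceding the statement: the string diagram $\llparenthesis \G \rrparenthesis$ admits a presentation as a sequential composition in which $\llparenthesis\seval{L}\rrparenthesis = L$ appears (up to SMC equations) as an identified sub-diagram. Replacing that sub-diagram by $R$ yields a string diagram which, by Thm.~\ref{thm:def} again, has $\H$ as its hypernet interpretation; and the replacement is justified in the CCC by the assumed equation $L=R$. Conversely, any equation derivable in the CCC by such a substitution is captured by one of these rewrites, so the rewriting system is both sound and complete for CCC reasoning.

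The main obstacle, I expect, will be the hierarchical equations, in particular beta-reduction (BR) and the (Lamb) axiom from Fig.~\ref{fig:lesmodel}, where $L$ or $R$ contains an abstraction node whose inner hypernet must be accessed by the rewrite. Here I must lean on the distinctive feature highlighted in the remark following Example~\ref{ex:po}: morphisms may send outermost vertices and edges to images with different parents. I must carefully check that (i)~the image of the match remains down-closed when the match crosses an abstraction boundary, so Lemma~\ref{lem:po-complement} still applies, and (ii)~after pasting in $\seval{R}$ via Lemma~\ref{lem:pushout}, the resulting structure still satisfies the acyclicity of the parent relation and the source/target uniqueness conditions characterising hypernets. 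Both verifications reduce to local checks on each CCC axiom, but the abstraction-related ones require tracking the hierarchy explicitly rather than relying on the outermost-level decompositions used in the non-hierarchical case.
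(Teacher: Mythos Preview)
Your proposal is correct and takes essentially the same approach as the paper: the paper's proof is a one-line sketch stating that for each CCC equation one must exhibit a span in the style of Example~\ref{ex:po} and verify the hypotheses of Lemmas~\ref{lem:pushout} and~\ref{lem:po-complement}, which is exactly what you outline (with your $K$ playing the role of the paper's interface $I$). Your additional soundness argument via Theorem~\ref{thm:def} and your explicit flagging of the hierarchical cases (BR, Lamb) go beyond what the paper spells out, but they are the natural elaboration of the ``straightforward, but tedious, exercise'' the paper defers.
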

\begin{proof}
It is a straightforward, but tedious, exercise to find spans in the style of example~\ref{ex:po}
relating the left- and right-hand sides of each required equation, and
checking that they satisfy the conditions of lemmas~\ref{lem:pushout} and \ref{lem:po-complement}.
\end{proof}

\section{Related work}

\subsection{String diagrams}

Cartesian closed categories have been thoroughly studied in the context of logic and type theory, because of the well-known correspondence of their internal language with $\lambda$-calculus and intuitionistic logic~\cite{sorensen2006lectures}. 
The \textit{linear} version of this triad involves monoidal rather than cartesian categories, but also proof nets, and linear logic, as indicated already in the original paper~\cite{DBLP:journals/tcs/Girard87}.
\cite{DBLP:conf/csl/Mellies06} provides the foundation on which we build our language of string diagrams, noting that all the basic ingredients are already there.  

The route of using an enhancement of the monoidal closed structure with additional properties to control sharing is fruitful and has been employed many times. 
For example it is found in~\cite{DBLP:journals/jlp/BonchiSZ18}, where the manipulation of variables endowed with algebraic theories is modeled as a cartesian structure on the top of a linear structure, or in~\cite{DBLP:conf/popl/Ghica07} to specify multiplexers and demultiplexers in high-level synthesis. 

To the best of our knowledge, we provide the first fully specified string diagrammatic language for cartesian closed categories generated as a graphical syntax quotiented by equations. 
Our approach shares similarities with the formalisms of sharing graphs for describing $\lambda$-calculus computations \cite{DBLP:conf/popl/Lamping90}. 
The main difference is that string diagrams, albeit graphical in appearance, can be manipulated as a syntax, whereas sharing graphs are usually studied as combinatorial objects. 
Unlike syntax, reasoning about graphs algebraically requires a higher degree of technical sophistication~\cite{DBLP:journals/tcs/Guerrini99}. 
Finally, sharing graphs are typically used to study low-level computational models for functional languages, in particular quantitative models~\cite{DBLP:journals/lmcs/MuroyaG19}, whereas our approach is more focussed on equational reasoning and rewriting, and does not have the ambition of investigating the resources employed during computation. 

Monoidal closed categories extend not only to cartesian closed categories, but also to $\star$-autonomous categories. 
This second variation is very much relevant to the study of multiplicative linear logic and it has been extensively studied in terms of proof nets. 
Our graphical calculus is essentially different from proof nets. 
The grammar of generating morphism does not stem from a sequent calculus, and we capture the intended semantics via equations rather than a correctness criterion.
But the connection might be made precise relying on the existing translations between proof nets and string diagrams~\cite{hughes,shulman}.
This is not the only possible extension. 
For example, another direction of extending monoidal categories is to traced monoidal categories~\cite{DBLP:conf/tlca/Hasegawa97}, which has interesting applications to modeling circuits with feedback~\cite{DBLP:conf/csl/GhicaJL17}.
Finally, a different style of hierarchical string diagrams appear in the literature to represent universal properties graphically such as Kan extensions~\cite{DBLP:conf/mpc/Hinze12} and free monads~\cite{DBLP:conf/icfp/PirogW16}. 

The only other proposal for a string-diagram language for monoidal closed categories which we are aware of is that of~\cite{baez2010physics}. 
We found a great deal of inspiration in their proposal, but in the process of fully working out the equational properties and the combinatorial structure we felt compelled to deviate somewhat from \textit{loc. cit.}.
To keep the language of types as simple as possible and as \textit{strict} as possible they propose an intriguing graphical innovation, a so-called \textit{clasp} operator on stems.
The exponential type is represented, using the clasp, as~\includegraphics[scale=.4]{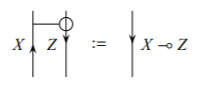}. 
Much like in our own language, a \textit{bubble} is used to represent currying. 
A simple example which uses both these graphical devices is the \textit{name} of a function $f:X\to Y$, represented as~\includegraphics[scale=.5]{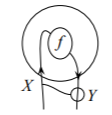}. 
We found it difficult to work out some of the unspecified details in particular for the \textit{clasp}, e.g. how it can be used to represent higher-order objects (e.g. $(A\multimap B)\multimap C$). 
But, more fundamentally, it was unclear to us what status we can give the clasp both as a syntactic and as a combinatorial object. 
To conclude, the approach in \textit{loc. cit.} is ambitious and innovative and, if the details were figured out, potentially more elegant in that it preserves an appealing visual parallel between monoidal closed and compact closed structures. 
However, filling in the missing details proved to be too challenging. 

\subsection{Rewriting of hierarchical graphs}\label{ssec:related-rewriting}

The notion of hierarchical hypergraph used in this paper is closely inspired by, and a formalisation of the graphs used in~\cite{spartan}. 

Although there is no consensus on a standard definition of hierarchical graphs, the various approaches to rewriting on these structures~\cite{DBLP:journals/cuza/BruniGL10,DBLP:journals/jcss/DrewesHP02,DBLP:journals/jcss/Palacz04} give slight variations on the idea of graphs containing other graphs and notions of morphism between them.
Some of the variations are minor: our hierarchical hypergraphs are directed, but some works do not make this choice~\cite{DBLP:journals/jcss/DrewesHP02}.
Other differences are much more stark.
Sometimes edges are permitted to connect vertices with different parents, as in~\cite{DBLP:journals/jcss/Palacz04}, sometimes this is prohibited (as it is here), and sometimes it is possible with the aid of an explicit renaming function, as in~\cite{DBLP:journals/cuza/BruniGL10}.
Some approaches consider only ``strict morphisms'' sending items in the outermost level to the outermost level, but others consider a larger class where this need not hold.
Due to the subtle but technically significant differences between our requirements and the properties of previous works, it was not possible to reuse previous work wholesale, and we found it necessary to introduce our own variation.

The formal correspondence between monoidal closed categories and hierarchical hypergraphs lies in a tradition of analogous results relating string diagram rewriting and double-pushout hypergraph rewriting, see~\cite{DBLP:conf/lics/BonchiGKSZ16}. 
To the best of our knowledge, such correspondence has not been spelled out in the way presented in our work, although the idea of linking the exponential structure of closed categories with the hierarchy structure of hierarchical hypergraphs may be found in~\cite{DBLP:journals/entcs/CocciaGM02}. 
Although it does not uses string diagrams or other categorical tools, the algebraic specification language for hierarchical graphs studied in \cite{DBLP:conf/tgc/BruniGL10} is aiming towards similar goals. 

\subsection{Syntax as a graph-like data structure}

Representing intermediate stages of the compiler as graphs is a long-established practice in compiler design and engineering. 
Graphs are an \emph{efficient} syntactic representation which are recognised as a better target for optimisation and analysis than raw text. 
In its simplest incarnation the graph representation of terms is just a \emph{abstract syntax tree}, but more sophisticated representations were increasingly used~\cite{DBLP:conf/irep/ClickP95}, sometimes leading to specific and novel optimisation techniques~\cite{DBLP:conf/pldi/NandiWAWDGT20}.

The use of graph-like representation outside of compiler engineering has a lot of untapped potential, as advocated by some~\cite{DBLP:journals/corr/abs-2102-02363}. 
This is not entirely new, for example interaction nets are a graph-like semantics of higher-order computation~\cite{DBLP:conf/popl/Lafont90}, but they are specified at a fairly high level of informality which string diagrams and hypernets make fully formal in two different ways. 

Although not presented explicitly as a string diagram language, the treatment of closures in~\cite{DBLP:journals/entcs/SchweimeierJ99} is  related in methodology to our work, although the use of \emph{partially-traced partially-closed pretmonoidal categories} as the categorical setting, in order to accommodate for effects, is significantly different than our cartesian closed categorical language. 

Finally, another related line of work which we found inspirational is the use of graph-like languages inspired by proof nets to bridge the gap between syntax and abstract machines, in order to provide a quantitative analysis of reduction strategies for the lambda calculus~\cite{DBLP:journals/tcs/Accattoli15}. 

\subsection{Automatic differentiation}

Our AD algorithm is an adaptation of~\cite{DBLP:journals/toplas/PearlmutterS08}.
Beyond the presentation based on string diagrams, the main differences are that our algorithm applies to simply-typed, recursion-free code and it acts as a source-to-source transformation, lacking the reflection features that enabled higher-order differentiation in the original work.
We chose to focus on this algorithm for a few reasons: first, reverse-mode AD is both more immediately useful (see~\cite{DBLP:journals/jmlr/BaydinPRS17} for a comparison of both approaches) and harder to implement and prove correct than forward-mode AD. Simple forward-mode AD algorithms based on operator overloading~\cite{DBLP:conf/icfp/Karczmarczuk98,DBLP:conf/popl/PearlmutterS07} capable of handling higher-order functions predate~\cite{DBLP:journals/toplas/PearlmutterS08}.
Second, it is to our knowledge the first published algorithm for performing reverse-mode AD on higher-order code\footnote{An earlier algorithm appears in \cite{karczmarczuk2000adjoint}, however it is argued in~\cite{DBLP:journals/toplas/PearlmutterS08} that this algorithm results in different computation graphs -- and worse asymptotic complexity -- than `traditional' reverse-mode AD},
it forms the basis of a number of efficient implementations~\cite{DBLP:journals/corr/BaydinPS16,DBLP:journals/corr/SiskindP16a} and does not require more complex features, unlike e.g.~\cite{DBLP:journals/pacmpl/WangZDWER19} which makes use of mutable state and continuations or~\cite{DBLP:journals/pacmpl/BrunelMP20} which relies on a limited form of continuations to encode dual spaces.

A wave of recent research has also tackled the issues of correctness in automatic differentiation. 
Notably,~\cite{DBLP:journals/pacmpl/BrunelMP20} and~\cite{DBLP:conf/esop/Vakar21} provide correct reverse-mode AD algorithms capable of handling closures. Unlike the first work, however, our algorithm is purely functional and, while the second one can correctly differentiate terms with higher-order inputs and outputs, it achieves so by using a more expensive representation of tangents of function spaces.
The main contribution of our approach, however, is the simplicity of the involved proofs thanks to our diagrammatic notation which we believe improves on the readability of the original paper~\cite{DBLP:journals/toplas/PearlmutterS08} and the denser proofs in newer literature~\cite{DBLP:journals/pacmpl/BrunelMP20,DBLP:conf/fossacs/HuotSV20,DBLP:conf/esop/Vakar21}.

\section{Conclusion and further work}
In this paper we have presented a recipe for provably correct reverse automatic differentiation built around a new, hierarchical, calculus of string diagrams. 
As we have seen, the string diagram presentation simplifies much of a bookkeeping of variables which a term calculus would require, which, we believe, makes a complicated algorithm more readable. 
More importantly, the new perspective offered by string diagrams and in particular the presentation of terms as \emph{foliations} opens the door for new and useful proof techniques. 
Finally, the combinatorial representation of string diagrams as hypergraphs makes it possible to formulate automatic differentiation using the established language of DPO graph rewriting.

In this paper we have not discussed implementation matters, yet these are of the essence. 
This algorithm is a practical one and it can be incorporated into real-life compilers for real-life programming languages. 
We surmise that the new an improved perspective on AD that string diagrams offers will help handle other challenging features of real-life languages, such as effects and, in particular, the crucial role that closures play. 
This is work is ongoing. 

%% Acknowledgments
\begin{acks}                            %% acks environment is optional
                                        %% contents suppressed with 'anonymous'
  %% Commands \grantsponsor{<sponsorID>}{<name>}{<url>} and
  %% \grantnum[<url>]{<sponsorID>}{<number>} should be used to
  %% acknowledge financial support and will be used by metadata
  %% extraction tools.
  This material is based upon work supported by the
  Engineering and Physical Sciences Research Council (UK) under Grant
  EP/V001612/1 (Nominal String Diagrams).
\end{acks}

%% Bibliography
\bibliography{arxiv}

\newpage
% \appendix

% \section{Animated pictures}
% \label{appendix:animations}

% The following diagrams require a fully-featured PDF reader.\footnote{\url{https://acrobat.adobe.com/uk/en/acrobat/pdf-reader.html}}
% \begin{figure}[h]
%   \centering
%   \begin{animateinline}[controls,autoplay,loop]{2}
%     \multiframe{10}{n=1+1}{
%       \scalebox{1.3}{\diagram{pics/ad/example/example-\n}}
%     }
%   \end{animateinline}
%   \caption{Computing the adjoint of a polynomial}
% \end{figure}

% \begin{figure}
%   \begin{animateinline}[controls,autoplay,loop]{2}
%     \multiframe{8}{n=1+1}{
%       \scalebox{1.5}{\diagram{pics/ad/beta-local-proof-\n}}
%     }
%   \end{animateinline}
%   \caption{Beta-soundness of AD}
% \end{figure}

\end{document}